\documentclass[12pt]{amsart}
\usepackage{amssymb}
\usepackage{amsmath}
\usepackage{bbm}
\usepackage{color}
\usepackage{verbatim}

\definecolor{darkred}{rgb}{0.75,0,0.3}

\setlength{\textwidth}{16cm}
\setlength{\textheight}{23.5cm}   
\setlength{\oddsidemargin}{0.0cm}
\setlength{\evensidemargin}{0.0cm}
\setlength{\voffset}{-1cm}
%\newcommand\AC{\mathcal A}

%def\func#1{\mathop{\rm #1}}%

%\newcommand\af{\mathfrak {a}}
%{\operatorname{\sf A}}
%\newcommand\Aff{\operatorname{\sf Aff}}
%\newcommand\al{\alpha}
\newcommand\AND{\quad\text{and}\quad}

\newcommand\dde{\boldsymbol{\delta}}
\newcommand\dg{\mathsf{deg}^+}
\newcommand\DD{\mathsf{D}}
\newcommand\ep{\varepsilon}
%{\Gamma}
\newcommand\Ex{\mathbb{E}}
\newcommand\HH{\mathsf{H}}
\newcommand\interior{\mathsf{int}\,}
\newcommand\N{\mathbb N}
\newcommand\Prob{\mathbb{P}}
\newcommand\Q{\mathbb{Q}}
\newcommand\R{\mathbb R}

\newcommand\wh{\widehat}
\newcommand\wt{\widetilde}

\numberwithin{equation}{section}

\newtheoremstyle{mythm}% name
  {9pt}%      Space above, empty = `usual value'
  {9pt}%      Space below
  {\itshape}% Body font
  {0pt}%         Indent amount (empty = no indent, \parindent = paraindent)
  {\bfseries}% Thm head font
  {}%        Punctuation after thm head
  { }% Space after thm head: \newline = linebreak
  {\thmnumber{(#2)}\thmname{ #1}\thmnote{ #3}}%         Thm head spec

\newtheoremstyle{mydef}% name
  {9pt}%      Space above, empty = `usual value'
  {9pt}%      Space below
  {\normalfont}% Body font
  {0pt}%         Indensf suppt amount (empty = no indent, \parindent = paraindent)
  {\bfseries}% Thm head font
  {}%        Punctuation after thm head
  { }% Space after thm head: \newline = linebreak
  {\thmnumber{(#2)}\thmname{ #1}\thmnote{ #3}}%         Thm head spec

\theoremstyle{mythm}
\newtheorem{thm}[equation]{Theorem.}
\newtheorem{pro}[equation]{Proposition.}
\newtheorem{lem}[equation]{Lemma.}

\theoremstyle{mydef}
\newtheorem{dfn}[equation]{Definition.}
\newtheorem{exa}[equation]{Example.}

\newtheorem{app}[equation]{Application.}

\newtheorem{rmk}[equation]{Remark.}

\begin{document}$\,$ \vspace{-1truecm}
\title{Comparing entropy rates on finite and infinite rooted trees} %with length functions}
\author{\bf Thomas Hirschler, Wolfgang Woess}
\address{\parbox{.8\linewidth}{Institut f\"ur Diskrete Mathematik,\\ 
Technische Universit\"at Graz,\\
Steyrergasse 30, A-8010 Graz, Austria\\}}
\email{thirschler@tugraz.at, woess@tugraz.at}
%\date{\today} 
%%\thanks{The second author acknowledges support by a visiting professorship
%at Universit\'e de Tours}
%\subjclass[2000] {60G50; %Sums of indep, whereendent random variables; random walks 
%                  60J05 %Markov processes with discrete parameter
%                  }
%\keywords{reflected random walk, recurrence, invariant measure, process of
%reflections, local contractivity, stochastic iterated function system}
\begin{abstract}
We consider stochastic processes with (or without) memory 
whose evolution is encoded by a finite or infinite rooted tree.  
The main goal is to compare the entropy rates of a given base process and
a second one, to be considered as a perturbation of the former. The processes
are described by probability measures on the boundary of the given tree, and
by corresponding forward transition probabilities at the inner nodes. The comparison is
in terms of Kullback-Leibler divergence.
We elaborate and extend ideas and results of B\"ocherer and Amjad.
Our extensions involve
length functions on the edges of the tree as well as nodes with countably many successors.
In particular, in the last part, we consider trees with infinite geodesic rays and 
random perturbations of a given process.
\\[4pt]
{\sc Keywords.} Tree, forward Markov chain, comparison of entropy rates,  Kullback-Leibler divergence, 
perturbed stochastic process.
\end{abstract}

\maketitle

\markboth{{\sf T. Hirschler and W. Woess}}
{{\sf Entropy rates on rooted trees}}
\baselineskip 15pt

%%%%%%%%%%%%%%%%%%%%%%%%%%%%%%%%%%%
\section{Introduction}\label{sec:intro}

In \cite{BoeAm}, {\sc B\"ocherer and Amjad} consider probability
distributions on the set of leaves of a finite, rooted tree.
They derive a bound on the difference of the entropies between two such 
distributions, where the second one is ``memoryless'' in terms of
an encoding of the tree in terms of strings over a finite alphabet.
The bound is in terms of the Kullback-Leibler divergence of the first 
with respect to the second distribution, and of the lengths (heights)
of the leaves; see \eqref{eq:BoeAm} below.

In this note, we generalise these inequalities, thereby also streamlining
the mathematical substance. Our generalisations -- which are qualitative rather than
quantitative -- consist in (1) comparing
the entropies of two arbitrary distributions on the leaves in terms of
the Kullback-Leibler divergence, (2) involving more general length functions
than those where each edge of the tree has length $1$, and (3) allowing internal
nodes with a countable infinity of forward neighbours. Finally, (4),  
we also consider trees which have infinite geodesic rays, corresponding
to boundary points at infinity, and (5) consider the entropy rate of a random
perturbation of a given stochastic process.

\smallskip

In the main parts of this paper -- except for \S \ref{sec:infinite} --  
the boundary $\partial T$ of our tree $T$ with root $o$
consists of leaves (vertices with no forward neighbour) and is at most countably
infinite. It is equipped with two probability measures $\Prob$ and $\Q$, where
$\Q$ plays the role of a fixed reference measure and $\Prob$ should be thought
of as a perturbation of $\Q$ which corresponds to some experimental situation:
for example, the perturbation errors present in $\Prob$ may be inherent in the 
simulation of a stochastic process whose distribution is $\Q$.

The internal vertices (nodes) $x$ of $T$ are equipped with a function $\ell(x) > 0$
with the interpretation that each outgoing edge at $x$ has length $\ell(x)$.
The absolute value $|x|_{\ell}$ of any node is then the sum of the lengths
of the edges on the unique geodesic path from $o$ to $x$. The habitual choice
is $\ell = \ell_{\sharp} \equiv 1$. Thus we have the expected length 
$\ell(\Prob) = \sum_{v \in \partial T} |v|_{\ell}\,\Prob(v)$, and analogously 
$\ell(\Q)$. Let $\HH(\Prob)$ and $\HH(\Q)$ be the entropies of the two measures,
and $\DD(\Prob\|\Q)$ the Kullback-Leibler divergence of $\Prob$ with respect to $\Q$.
The aim is to estimate the difference between the entropy rates $\HH(\Prob)/\ell(\Prob)$
and $\HH(\Q)/\ell(\Q)$ in terms of $\DD(\Prob\|\Q)/\ell(\Prob)$.

The probability measure $\Prob$ gives rise to a forward Markov chain on the internal nodes of $T$,
whose transition probabilities are $p(y|x) = \Prob(\partial T_y)/\Prob(\partial T_x)$,
when $y$ is a forward neighbour of $x$, and where $\partial T_x$ denotes the set of leaves
$v$ for which $x$ lies on the geodesic path from $o$ to $v$. The analogous transition
probabilities associated with $\Q$ are denoted $q(y|x)$. A basic tool for achieving
the desired estimate is the ``Leaf-Average Node-Sum Interchange Theorem'' (LANSIT) of
{\sc Rueppel and Massey}~\cite{RuMa}. It relates $\HH(\Prob)/\ell(\Prob)$ with the entropies 
$\HH\bigl(p(\cdot|x)\bigr)$ and $\DD(\Prob\|\Q)/\ell(\Prob)$ with the divergences
$\DD\bigl(p(\cdot|x)\|q(\cdot|x)\bigr)$.

The main concern of \cite{BoeAm} regards the situation when the tree is finite, 
$\ell = \ell_{\sharp}\,$, and all probability measures $q(\cdot|x)=q(\cdot)$ coincide under 
a suitable labelling of the sets of forward neighbours of the internal nodes $x$. 
In this case, the LANSIT yields $\HH(\Prob)/\ell_{\sharp}(\Q) = \HH(q)$, and
the estimate of \cite{BoeAm} can be read as
\begin{equation}\label{eq:BoeAm}
\left|\frac{\HH(\Prob)}{\ell_{\sharp}(\Prob)} - \HH(q)\right| <  M\varphi(\delta) 
+ \frac{C}{\delta} \sqrt{\frac{\DD(\Prob\|\Q)}{\ell_{\sharp}(\Prob)}}\,,
\end{equation}
where $\delta$ is an arbitrarily small positive constant, $M$ is the (constant)
number of forward neighbours of the internal nodes, $\varphi(\delta) = \delta\, \log_2(1/\delta)$,
and $C \approx 1.25$.

We start in \S \ref{sec:local} by explaining our notation, which comprises some variations with respect
to previous work \cite{RuMa}, \cite{BoeAm}. We review the proof of the LANSIT 
in presence of of a general length function as above. As mentioned, our tree is allowed
to have nodes with infinitely many forward neighbours. Thus, our entropies and 
divergences can be infinite series, and we require the latter to be absolutely convergent.
Furthermore, we have local entropies $\HH\bigl(p(\cdot|x)\bigr)$ and $\HH\bigl(q(\cdot|x)\bigr)$
at the possibly infinitely many internal nodes, and we require the defining series
to be uniformly summable. See \S \ref{sec:entropies}, which consists mainly in a ``guided tour'' through
the needed summability properties and simpler conditions under which they hold.
Some further basics are also presented in \S \ref{sec:entropies}. 

Theorem \ref{thm:compare} in \S \ref{sec:compare} contains our generalisation of \eqref{eq:BoeAm}.
We consider arbitrary probability measures $\Prob$ and $\Q$ that are supported by
$\partial T$ and satisfy the mentioned summability conditions. Consequently,
our estimate of $|\HH(\Prob)/\ell(\Prob) - \HH(\Q)/\ell(\Q)|$ is a sum of three
terms, where the first can be made arbitrarily small, the central one 
comprises $\sqrt{\DD(\Prob\|\Q)/\ell(\Prob)}$, and the last one is a multiple
of the variational difference between the node-average probability measures 
induced by
%associated with
$\Prob$ and $\Q$
on the set of internal vertices. The pre-factor of this term vanishes precisely
when $\HH\bigl(q(\cdot|x)\bigr)\big/\ell(x)$ is constant (or at least
asymptotically constant) on the internal nodes.
After a discussion of various aspects and variations of this estimate, in \S \ref{sec:infinite}
we turn our attention to trees which have infinite geodesic paths. In this case,
the boundary of the tree consists of the space of ``ends'' (limit points of infinite
paths) and possibly also leaves. Applying Theorem \ref{thm:compare} yields
estimates for the entropy rate and variants thereof, see Theorem \ref{thm:asy}.
In particular, coming back to the natural length function $\ell_{\sharp}\,$,
if $\Q$ is the probability distribution on the trajectory space of a stochastic process
with asymptotically constant, finite forward entropies $\HH\bigl(q(\cdot|x)\bigr)$
on the associated infinite tree and $\Prob$ is the probability governing a perturbation
of that process, then finiteness of $\DD(\Prob|\Q)$ implies that the perturbed process has the
same entropy rate as the non-perturbed one. Note again that we allow the
state space of these processes to be countably infinite, as long as $\HH\bigl(q(\cdot|x)\bigr)$
is finite. Moreover, this also holds when $\DD(\Prob_n|\Q_n)/n \to 0$,
where $\Prob_n$ and $\Q_n$ are the distributions of the respective processes up
to time $n$. A final application is given in Theorem \ref{thm:randper}, where we consider
random perturbations of a given stochastic process.

The proofs of all theorems are contained in the Appendix.

We remark that in a somewhat different vein, there has been work on the statistics of
ergodic Markov chains. {\sc Ciuperca, Girardin and Lhote}~\cite{CiGi}, \cite{CiGiLh},
as well as {\sc Yari and Nikooravesh}~\cite{YaNi}, consider the approximation 
of the entropy rate via the natural estimator (in terms of relative frequencies 
of transitions between two states, resp. visits in one state) of the chain at time 
$n$, as $n \to \infty$. Constantness of local entropies of the chain is not needed here.
Note that this is not comparison with one perturbed chain,
but with a sequence of random perturbations: the estimated transition probabilities
and invariant distribution are updated at each time step. In \cite{CiGi}, the state
space is finite, and in \cite{YaNi}, at each state there are only finitely many
positive outgoing transition probabilities. Both correspond to situations where
the corresponding infinite tree is locally finite. In  \cite{CiGiLh}, this finite
range assumption is replaced by a quasi-power property.

%%%%%%%%%%%%%%%%%%%%%%%%%%%%%%%%%%%%%%%%%%%%%%%%%%%%%%%%%%%%%%%
\section{Setting, notation, and preliminary facts}\label{sec:local}

\noindent\textbf{A. Trees, leaves, and lengths}

\smallskip

A \emph{rooted tree} is a connected graph $T$ without circles, with one vertex $o$
desginated as the root. We shall tacitly identify $T$ with the set of vertices of
the tree, and write $E(T)$ for the set of edges, if needed.

A \emph{geodesic} or \emph{geodesic path} in $T$ is
a sequence $\pi=[x_0\,,x_1\,,\dots, x_n]$ of successive neighbours in $T$ 
such that all $x_i$ are distinct. For any pair of vertices $x, y$ there is precisely one 
geodesic $\pi(x,y)$ from $x$ to $y$. Analogously, an infinite geodesic has
the form $\pi=[x_0\,,x_1\,,x_2\,,\dots]$ where $x_{k-1}$ and $x_k$ are neighbours
for each $k \in \N$.

While we  do not necessarily assume $T$ to be finite, \emph{we shall mostly assume that 
it contains no infinite geodesic,} with the exception of \S \ref{sec:infinite}.

Every vertex $x \in T \setminus \{o\}$ has a unique \emph{predecessor} $x^-$,
the neighbour of $x$ on the geodesic $\pi(o,x)\,$. In this case, we say that $x$ is
a \emph{successor} of $x^-$, and write $N(x)$ for the set of successors of $x \in T$. 
The \emph{forward degree} $\dg(x)$ of $x \in T$ 
is the number of successors of $x$. In our setting, it will be natural to assume that
$\dg(x) \ne 1$ for all $x \in T$.
The set of \emph{leaves} or \emph{boundary}
$\partial T$ of $T$ consists of all vertices which have no successor. 
By our assumption, it is non-empty, and every vertex lies on a geodesic from
$o$ to some leaf. The \emph{interior} of $T$ is $\interior T = T \setminus \partial T$.

We can describe each edge of $T$ as $e= [x^-,x]$, where $x \in T \setminus \{o\}$. We
now assign a \emph{length} $\ell(e) > 0$ to each edge $e$. 

\medskip

\noindent
$\bullet\;$ \emph{We restrict attention to length functions where $\ell(e)$
depends only on $x^-$ for  $e= [x^-,x]$.} 

\medskip

That is, at each vertex the outgoing edges have the same length, and for $e$ as
above we can write $\ell(e)=\ell(x^-)$. We remark here that this restriction 
is not completely satisfactory: in general, each edge might have a label from some
alphabet, and one would like the cost (length) of the edge to depend on that label.
However, in this case, one of the tools, the LANSIT lemma \ref{lem:lansit} fails
in general. But with the interpretation that the cost of an edge in the next step 
along $T$ depends on the \emph{last} used symbol of the alphabet, we do have a reasonable
model. For example, it is realistic in a Markovian setting of the evolution of
a language that the transitions and their cost in the next step depend on the 
last input symbol. 

Back to our length function, the length of a path
is defined as the sum of the lengths of its edges. %It induces a metric on $T$. 
For a vertex $x$,
its length (distance to $o$) $|x|_{\ell}$ is the length of $\pi(o,x)\,$.
The standard length function is induced by $\ell_{\sharp}(e) = 1$. The associated path length
is the number of edges. For $x \in T$, we then just write $|x|$ for the resulting length
of $\pi(o,x)$. This is the ordinary \emph{height} of $x$.

\medskip

\noindent\textbf{B. Forward Markov chains and leaf distributions}

\smallskip

On a rooted tree, a \emph{forward Markov chain} is given by transition
probabilities 
$$
p(y|x)\,, \quad x \in \interior T\,,\; y \in N(x)\,, \quad \text{with}
\sum_{y\in N(x)} p(y|x) = 1\,.
$$
We can interpret this via a random process: a particle starts
at $o$. If at some time it is at a vertex $x \in \interior T$ then the particle moves to a 
random successor of $x$ chosen according to the probability distribution $p(\cdot|x)$.
When the particle reaches a leaf, it stops there (so that we might add the trivial
transition probabilities $p(v|v)=1$ for $v \in \partial T$).

We now define  a measure (function) on $\partial T$, as follows.
\begin{equation}\label{eq:P} 
\Prob(v)
= \prod_{x \in \pi(o,v) \setminus \{o\}} p(x|x^-)\,, \quad v \in \partial T\,.
\end{equation}

This is the \emph{hitting distribution} on that set: for a leaf $v$, we have that 
$\Prob(v)$ is the probability that our forward Markov chain terminates at $v$. 
Conversely, given any probability measure $\Prob$ on $\partial T$, it induces 
a forward Markov chain  as follows.

\begin{dfn}\label{def:cone} 
For $x \in T$, the \emph{cone} or \emph{branch} at $x$ is the sub-tree with root $x$ given
by
$$
T_x = \{ y \in T : x \in \pi(o,y) \}
$$
\end{dfn} 

Its boundary $\partial T_x$ consists of all $v \in \partial T$ with $x \in \pi(o,v)\,$.
Then we define 
$$
p(x|x^-) = \Prob(\partial T_x)/\Prob(\partial T_{x^-})\,,
$$
where $\Prob(\partial T_x) = \sum_{v \in \partial T_x} \Prob(v)$.
Thus, we have a one-to-one correspondence between probability distributions on
$\partial T$ and forward Markov chains on the rooted tree~$T$. See \cite{RuMa} for an introduction to these ideas.
More generally, we can formulate the following.

\begin{dfn}\label{def:section}
 A \emph{cross section} or \emph{cut} is a subset $S \subset T$ such that for every leaf
$v \in \partial T$, the geodesic $\pi(o,v)$ meets $S$ in precisely one point.
We write $T^S$ for the sub-tree with root $o$ consisting of all vertices that lie on
some geodesic $\pi(o,s)$ with $s \in S$.
\end{dfn}

We have  $\partial T^S = S$ for any section. We define $\Prob_S$ on $S$
in the same way as $\Prob$ was defined on $\partial T$ in \eqref{eq:P}, so that 
$\Prob_S(x) = \Prob(\partial T_x)$. We get the hitting
distribution of the forward Markov chain on $S$, a probability distribution.

Given our length function $\ell$, the \emph{expected length} of a section $S$ with respect
to $\Prob_S$ is 
$$
\ell(\Prob_S) = \sum_{x \in S} |x|_{\ell} \,\Prob_S(x)\,,\quad \text{in particular,}
\quad \ell(\Prob) = \ell(\Prob_{\partial T}) 
$$
In \S \ref{sec:local} -- \S \ref{sec:compare}, 
we shall always assume that $\ell(\Prob) < \infty$.
At last, given the probability distribution $\Prob$ on $\partial T$, we define a new 
\emph{(node average)} measure %s
$\mu_{\Prob}$ on $\interior T$ %and $\Prob_{\ell}$ on $\partial T$ 
by
\begin{equation}\label{eq:nu}
 \mu_{\Prob}(x) = \frac{\ell(x)}{\ell(\Prob)} \, \Prob(\partial T_x) \,.
% \AND \Prob_{\ell}(v) = \frac{|v|_{\ell}}{\ell(\Prob)} \, \Prob(v)\,,
\end{equation}
%respectively. Both are 
It is a probability measure: 
%s. This is clear for $\Prob_{\ell}\,$, while for $\mu_{\Prob}$, 
we use the fact that $v \in \partial T_x \iff x \in \pi(o,v^-)$ 
for any leaf $v$ and internal node $x$, whence
$$
\sum_{x \in \interior T} \ell(x)\, \Prob(\partial T_x) 
= \sum_{x \in \interior T} \ell(x) \sum_{v \in \partial T_x} \Prob(v)
= \sum_{v \in \partial T} \Prob(v) \sum_{x \in \pi(o,v^-)} \ell(x) = \ell(\Prob)\,.
$$ 

\medskip
\goodbreak

\noindent\textbf{C. Gradient, Laplacian, and leaf-node interchange}

\smallskip

\begin{dfn}\label{def:lap}
 The \emph{gradient} $\nabla$ and \emph{Laplacian} $\Delta = \Delta_{\Prob}$ 
on $T$ are as follows. For a function $f: T \to \R$ 
$$
\nabla f (y) = \frac{f(y)-f(y^-)}{\ell(y^-)}\,,\; y \in T \setminus \{o\}\,,
\AND 
\Delta f(x) = \sum_{y\in N(x)} \nabla f(y) \, p(y|x)\,.
$$
\end{dfn}

Some remarks are due:  here, our notation differs from \cite{BoeAm} as well as the old note
\cite{RuMa}, who write $\Delta$ for our $\nabla$.
In general, one should see $\nabla f$ as a function on $E(T)$, where $f$ could be interpreted
as a voltage function For $e=[y^-,y]$, the difference $f(y)-f(y^-)$ is the voltage
drop, and thinking of $\ell(e)$ as the resistance of the edge, 
$\nabla f(e) = \bigl(f(y)-f(y^-)\bigr)\big/\ell(e)$ can be interpreted as Ohm's law.
With our assumption, $\ell(e) = \ell(y^-)$, and we can adopt the above notation.
Regarding the Laplacian
$$
\Delta f(x) = \frac{1}{\ell(x)}\sum_{y \in N(x)} \bigl(f(y)-f(x)\bigr) \, p(y|x)\,,
$$
as long as the forward degrees are finite, it is well defined for any function $f$.
In presence of infinitely many successors, we require that the series converges
absolutely:
\begin{equation}\label{eq:absconv}
  |\Delta_{\Prob}|f(x) := \frac{1}{\ell(x)} \sum_{y\in N(x)} \bigl|f(y)-f(x)\bigr| \, p(y|x) < \infty \quad  
\text{for every }\; x \in \interior T\,.
\end{equation}
Our notation is compatible with the one of reversible Markov chain theory,
see e.g. {\sc Woess}~\cite[Chapter 4]{WMarkov}, although the present forward Markov
chains are  not reversible. 

For a probability distribution $\nu$ on any finite or countable set $\mathcal{X}$,
and any function $f : \mathcal{X} \to \R$, we write
$$
\Ex_{\nu}(f) = \sum_{x \in \mathcal{X}} f(x)\, \nu(x)\,,
$$
whenever $f$ is \emph{$\nu$}-integrable, that is $\Ex_{\nu}(|f|) < \infty\,$.
In this setting, the ``Leaf-average node-sum interchange theorem'' (LANSIT) of \cite{RuMa}
is the following. % application of the discrete version of Fubini's theorem.

\begin{lem}\label{lem:lansit} Let $\Prob$ be a probability distribution on $\partial T$ and
$p(\cdot|\cdot)$ the associated forward Markov chain. 
Let $f: T \to \R$ be a function which satisfies 
$
\Ex_{\mu_{\Prob}}\bigl(|\Delta_{\Prob}|f\bigr) < \infty\,.
$
Then $f$ is $\Prob$-integrable on~$\partial T$, and
$$
\frac{1}{\ell(\Prob)}\, \Ex_{\Prob}\bigl(f - f(o)\bigr) = \Ex_{\mu_{\Prob}}\bigl(\Delta_{\Prob}f\bigr)
$$
\end{lem}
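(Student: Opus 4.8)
The plan is to telescope $f(v) - f(o)$ along each geodesic $\pi(o,v)$ into a sum of gradient increments weighted by edge lengths, then interchange the resulting double sum (over leaves and over nodes on the geodesic) using the fact that $v \in \partial T_x \iff x \in \pi(o,v^-)$, which was already used above to show $\mu_{\Prob}$ is a probability measure. Concretely, for $v \in \partial T$ write $\pi(o,v) = [o = x_0, x_1, \dots, x_n = v]$; then
$$
f(v) - f(o) = \sum_{k=1}^{n} \bigl(f(x_k) - f(x_{k-1})\bigr) = \sum_{k=1}^{n} \ell(x_{k-1})\,\nabla f(x_k)
= \sum_{x \in \pi(o,v^-)} \ell(x)\, \Delta'_x,
$$
where $\Delta'_x := \nabla f(y)$ for the unique $y \in N(x)$ lying on $\pi(o,v)$; regrouping the increment at $x$ over all leaves $v$ passing through a fixed successor $y$ of $x$ is what produces $\sum_{y \in N(x)} \nabla f(y)\, p(y|x) = \ell(x)\,\Delta_{\Prob} f(x)$ after normalising by $\Prob(\partial T_x)$.

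First I would establish the telescoping identity above and the absolute-value version of it, $\sum_{x \in \pi(o,v^-)} \ell(x)\,|\nabla f(y_x)| =: g(v)$ with $y_x$ the successor of $x$ on $\pi(o,v)$, so that $|f(v) - f(o)| \le g(v)$ pointwise on $\partial T$. Next I would compute $\Ex_{\Prob}(g)$ by the leaf-node interchange: since $\Prob(v) = \prod p(x_k|x_{k-1})$ factors along the geodesic,
$$
\sum_{v \in \partial T} \Prob(v)\, g(v)
= \sum_{v \in \partial T} \Prob(v) \sum_{x \in \pi(o,v^-)} \ell(x)\, |\nabla f(y_x)|
= \sum_{x \in \interior T} \ell(x) \sum_{y \in N(x)} |\nabla f(y)| \sum_{\substack{v \in \partial T_y}} \Prob(v),
$$
and the innermost sum is $\Prob(\partial T_y) = \Prob(\partial T_x)\, p(y|x)$, so the whole thing equals
$$
\sum_{x \in \interior T} \ell(x)\, \Prob(\partial T_x) \sum_{y \in N(x)} |\nabla f(y)|\, p(y|x)
= \ell(\Prob) \sum_{x \in \interior T} \mu_{\Prob}(x)\, |\Delta_{\Prob}| f(x) = \ell(\Prob)\, \Ex_{\mu_{\Prob}}\bigl(|\Delta_{\Prob}|f\bigr) < \infty
$$
by hypothesis. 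Hence $\Ex_{\Prob}(|f - f(o)|) \le \Ex_{\Prob}(g) < \infty$, which gives the claimed $\Prob$-integrability of $f$ on $\partial T$. Once integrability is known, I repeat the identical interchange without absolute values (now justified by absolute convergence, i.e. Fubini/Tonelli for the nonnegative majorant $g$) to obtain $\frac{1}{\ell(\Prob)}\Ex_{\Prob}(f - f(o)) = \Ex_{\mu_{\Prob}}(\Delta_{\Prob}f)$.

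The main obstacle is the rearrangement of a doubly-infinite sum: both $\partial T$ and the $N(x)$ may be countably infinite, and geodesics to leaves may be arbitrarily long, so the interchange of $\sum_{v}$ and $\sum_{x \in \pi(o,v^-)}$ is not a finite manipulation. The clean way around this is to carry out the interchange first for the nonnegative integrand $|\nabla f(y)|$, where Tonelli's theorem applies unconditionally and the computation above shows the common value is finite; this simultaneously proves integrability and licenses the signed interchange. A secondary point to check is that $\mu_{\Prob}$ assigns zero mass to nodes $x$ with $\Prob(\partial T_x) = 0$ (so those contribute nothing and $p(y|x)$ need not even be defined there), and that $|x|_\ell = \sum_{z \in \pi(o,x^-)} \ell(z)$ is exactly the bookkeeping weight that makes $\sum_{x} \ell(x)\Prob(\partial T_x) = \ell(\Prob)$ — both facts are already recorded in the text just before the lemma, so no new work is needed there.
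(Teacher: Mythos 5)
Your proposal is correct and is essentially the paper's own argument: the paper performs the same double-sum rearrangement (using $p(y|x)\,\Prob(\partial T_x)=\Prob(\partial T_y)$ and the identification of $\partial T_y$ with the leaves whose geodesics pass through $y$), just presented in the reverse direction, starting from $\Ex_{\mu_{\Prob}}(\Delta_{\Prob}f)$ and rearranging toward the telescoped leaf sum, with absolute convergence of the node-side series playing exactly the role of your Tonelli/majorant step. Your version merely makes the nonnegative-first (Tonelli, then Fubini) justification and the resulting $\Prob$-integrability of $f$ more explicit; no substantive difference.
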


We remark here that in \cite{RuMa}  (where $\ell = \ell_{\sharp}$) 
it is suggested in the last lines of p. 347 that
already the condition $\Ex_{\mu_{\Prob}}\bigl(|\Delta_{\Prob}f|\bigr) < \infty$
is sufficient for the conclusion of their Theorem 1. But this appears to be problematic,
since it does not allow to separate in two parts the sum as displayed in the
middle of the proof on that page. For safety's sake, we present our proof.

\begin{proof}[Proof of Lemma \ref{lem:lansit}]
We start with the right hand side:
$$
\begin{aligned}
\Ex_{\mu_{\Prob}}\bigl(\Delta_{\Prob}f\bigr) 
&\stackrel{(1)}{=} \frac{1}{\ell(\Prob)} \sum_{x \in \interior T} \sum_{y \in N(x)} 
\bigl( f(y) - f(x) \bigr) p(y|x) \Prob(\partial T_x) \\
&\stackrel{(2)}{=} \frac{1}{\ell(\Prob)} \sum_{y \in T \setminus \{o\}} 
\bigl( f(y) - f(y^-) \bigr) \Prob(\partial T_{y}) \\
&\stackrel{(3)}{=} \frac{1}{\ell(\Prob)} \sum_{y \in T \setminus \{o\}} \sum_{v \in \partial T_y}
\bigl( f(y) - f(y^-) \bigr)  \Prob(v)\\
&\stackrel{(4)}{=} \frac{1}{\ell(\Prob)} \sum_{v \in \partial T} \sum_{y \in \pi(o,v) \setminus \{o\}}
\bigl( f(y) - f(y^-) \bigr)  \Prob(v) 
= \frac{1}{\ell(\Prob)} \sum_{v \in \partial T} \bigl( f(v) - f(o) \bigr)  \Prob(v)\,.
\end{aligned}
$$
By assumption, the right hand side of (1) is absolutely convergent, so that the sum
can be rearranged to get (2) and thus (3), which are also absolutely convergent. This
%the discrete version of Fubini's theorem 
allows to exchange the order of summation,
leading to (4).
\end{proof}
%We remark that when $\Delta_{\Prob}f \ge 0$, the proof goes through no matter whether
%the resulting expectations are finite or not.

%\emph{We should examine in more detail the absolute convergence requirements !!!}

%\smallskip
%%%%%%%%%%%%%%%%%%%%%%%%%%%%%%%%%%%%%%%%%%%%%%%%%%%%%%%%%%%%%%%%%%%%%%
\section{Entropies and tightness}\label{sec:entropies}

\noindent\textbf{A. Entropies and relative entropies}

\smallskip

The \emph{entropy} of our probability measure $\Prob$ on $\partial T$ is
$$
\HH(\Prob)= \sum_{v \in \partial T} \Prob(v) \log_2 \frac{1}{\Prob(v)}\,.
$$
When we have infinite forward degrees in
$T$ then $\partial T$ is infinite. In this case, \emph{we assume that\/} $\HH(\Prob) < \infty$.
At each inner node $x$ of $T$, we have the outgoing 
probability distribution $p(\cdot|x)$ on the set of successors of $x$.
Its entropy is defined analogously,
$$
\HH_p(x) = \HH\bigl(p(\cdot|x)\bigr) = \sum_{y\in N(x)} p(y|x) \log_2 \frac{1}{p(y|x)}.
$$
\begin{lem}\label{lem:entropy} Our assumption that  $\HH(\Prob) < \infty$ 
implies finiteness of $\HH_p(x)$ for each $x$, and
$$
\frac{\HH(\Prob)}{\ell(\Prob)} =
\Ex_{\mu_{\Prob}}\!\!\left( \frac{\HH_p(x)}{\ell(x)}\right).
$$
\end{lem}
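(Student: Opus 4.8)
The plan is to feed the logarithmic ``information function'' attached to $\Prob$ into the Leaf-Average Node-Sum Interchange computation (Lemma \ref{lem:lansit}). First I would introduce $f : T \to \R$ by
$$
f(x) = \log_2\frac{1}{\Prob(\partial T_x)}\,,
$$
so that $f(o) = 0$ (as $\Prob(\partial T_o) = \Prob(\partial T) = 1$) and $f(v) = \log_2\bigl(1/\Prob(v)\bigr)$ for every leaf $v$, since $\partial T_v = \{v\}$. Using $p(y|y^-) = \Prob(\partial T_y)/\Prob(\partial T_{y^-})$ one gets, for every $y \in T \setminus \{o\}$,
$$
f(y) - f(y^-) = \log_2\frac{\Prob(\partial T_{y^-})}{\Prob(\partial T_y)} = \log_2\frac{1}{p(y|y^-)} \ge 0\,,
$$
hence $\nabla f(y) = \log_2\bigl(1/p(y|y^-)\bigr)\big/\ell(y^-) \ge 0$ and, for every $x \in \interior T$,
$$
\Delta_{\Prob} f(x) = \frac{1}{\ell(x)}\sum_{y \in N(x)} p(y|x)\,\log_2\frac{1}{p(y|x)} = \frac{\HH_p(x)}{\ell(x)} \in [0,+\infty]\,.
$$
In particular $|\Delta_{\Prob}|f = \Delta_{\Prob}f$, because all the increments $f(y) - f(y^-)$ are nonnegative.

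The crucial observation is then that, since every summand that appears is nonnegative, the chain of rearrangements (1)--(4) in the proof of Lemma \ref{lem:lansit} is licensed by Tonelli's theorem with no convergence hypothesis at all, and yields, as an identity of elements of $[0,+\infty]$,
$$
\ell(\Prob)\,\Ex_{\mu_{\Prob}}\!\!\left(\frac{\HH_p(x)}{\ell(x)}\right) = \sum_{x \in \interior T} \ell(x)\,\Prob(\partial T_x)\,\Delta_{\Prob}f(x) = \sum_{v \in \partial T} \Prob(v)\bigl(f(v) - f(o)\bigr) = \HH(\Prob)\,,
$$
where I have used the definition \eqref{eq:nu} of $\mu_{\Prob}$ and the identity $\sum_{x \in \interior T}\ell(x)\Prob(\partial T_x) = \ell(\Prob)$ recorded right after it. Since $\HH(\Prob) < \infty$ by hypothesis and $\ell(\Prob) < \infty$ by the standing assumption of \S\,\ref{sec:local}--\S\,\ref{sec:compare}, the left-hand side is finite; dividing by $\ell(\Prob)$ gives the asserted formula. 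Finiteness of each individual $\HH_p(x)$ then comes for free: the identity exhibits $\sum_{x \in \interior T}\mu_{\Prob}(x)\,\HH_p(x)/\ell(x)$ as a convergent series of nonnegative terms, and $\mu_{\Prob}(x) = \ell(x)\Prob(\partial T_x)/\ell(\Prob)$ is strictly positive for every $x \in \interior T$ (the forward chain $p(\cdot|x)$ being defined at all forces $\Prob(\partial T_x) > 0$; nodes with $\Prob(\partial T_x)=0$ carry no $\Prob$-mass and may be discarded), so $\HH_p(x)/\ell(x) < \infty$, and since $0 < \ell(x) < \infty$ we conclude $\HH_p(x) < \infty$.

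I expect the only genuinely delicate point to be conceptual rather than computational: one must be careful that nonnegativity really does let one bypass the absolute-convergence hypothesis built into Lemma \ref{lem:lansit} — that is, that $\Delta_{\Prob}f$ is unambiguously defined in $[0,+\infty]$ even at nodes where $\HH_p(x) = \infty$, and that none of the rearrangement steps secretly subtracts two infinite quantities (the pointwise Laplacian identity $\Delta_{\Prob}f(x)=\HH_p(x)/\ell(x)$ and every intermediate sum are sums of nonnegative terms, so this is fine). Once the sign bookkeeping is settled, the proof is just Tonelli plus the two elementary facts about $\mu_{\Prob}$ already established in \S\,\ref{sec:local}.
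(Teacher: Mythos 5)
Your proposal is correct and follows essentially the same route as the paper: take $f(x)=-\log_2\Prob(\partial T_x)$, note that all increments (hence $\Delta_{\Prob}f$) are nonnegative so the LANSIT rearrangement goes through by Tonelli without any convergence hypothesis, and read off both the identity and the finiteness of each $\HH_p(x)$ from $\HH(\Prob)<\infty$. The only difference is that you spell out the sign bookkeeping and the positivity of $\mu_{\Prob}(x)$ explicitly, which the paper leaves implicit.
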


On the right hand side, $x$ is the variable with respect to which the expectation is taken.

\begin{proof}  As in \cite{RuMa}, we set $f(x) = -\log_2 \Prob(\partial T_x)$.
Then $\Delta_{\Prob}f \ge 0$, so that the proof of Lemma \ref{lem:lansit}
goes through no matter whether the involved expectations are finite or not.
Thus, the stated formula holds, and when $\HH(\Prob) < \infty$, then also $\HH_p(x)$
is finite for all $x$.
\end{proof}

Note in particular the following special choice of the length function.
\begin{equation}\label{eq:entro-length}
 \text{If} \quad \ell(x)=\HH_p(x) \quad \text{then}\quad \ell(\Prob) = \HH(\Prob)\,.
\end{equation}

%\smallskip

Now let us take a second probability measure $\Q$ on $\partial T$, and denote the
transition probabilities of the associated forward Markov chain by $q(y|x)$, when
$y^- = x$. The well-known \emph{Kullback-Leibler divergence} or
\emph{relative entropy} of $\Prob$ with respect to $\Q$ is
$$
\DD(\Prob\|\Q) = \sum_{v \in \partial T} \Prob(v) \log_2 \frac{\Prob(v)}{\Q(v)}\,.
$$
For our purposes, it will always suffice to consider the non-degenerate situation
when $\Prob(v)\,,\; \Q(v) > 0$ for each leaf $v$. In the case when $\partial T$ is
infinite, we assume again that the above series converges absolutely.
Similarly, we have the local version
$$
\DD_{p,q}(x) = \DD\bigl(p(\cdot|x) \| q(\cdot|x)\bigr) \,,\quad x \in \interior T\,.
$$
We need again absolute convergence of its defining series. The following generalises
the corresponding result of \cite{BoeAm} to the case where $T$ has vertices with
infinite forward degree.

\begin{pro}\label{pro:div} 
If $\HH(\Prob) < \infty$ and the defining series of $\DD(\Prob\|\Q)$ is  absolutely convergent, 
then the same holds for each $\DD_{p,q}(x)$, and 
$$
\displaystyle \frac{\DD(\Prob\|\Q)}{\ell(\Prob)} =
\Ex_{\mu_{\Prob}}\!\!\left( \frac{\DD_{p,q}(x)}{\ell(x)}\right).
$$
\end{pro}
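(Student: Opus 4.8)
The plan is to mirror the proofs of Lemmas \ref{lem:lansit} and \ref{lem:entropy}: apply the LANSIT to the function
$$
f(x) = \log_2\frac{\Prob(\partial T_x)}{\Q(\partial T_x)}\,, \qquad x \in T\,,
$$
which is well defined and real-valued because we are in the non-degenerate situation where $\Prob(v),\Q(v)>0$ for every leaf $v$, so that $\Prob(\partial T_x),\Q(\partial T_x)>0$ for every $x\in T$. First I would record that $f(o)=0$ and $f(v)=\log_2\bigl(\Prob(v)/\Q(v)\bigr)$ for $v\in\partial T$, whence $\Ex_{\Prob}\bigl(f-f(o)\bigr)=\DD(\Prob\|\Q)$; and, using the telescoping identity $\Prob(\partial T_y)/\Prob(\partial T_{y^-})=p(y|y^-)$ together with its $\Q$-analogue, that $f(y)-f(y^-)=\log_2\bigl(p(y|y^-)/q(y|y^-)\bigr)$ for every $y\in T\setminus\{o\}$. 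Hence $\nabla f(y)=\frac1{\ell(y^-)}\log_2\bigl(p(y|y^-)/q(y|y^-)\bigr)$, and for $x\in\interior T$
$$
\Delta_{\Prob}f(x)=\frac1{\ell(x)}\sum_{y\in N(x)}p(y|x)\log_2\frac{p(y|x)}{q(y|x)}=\frac{\DD_{p,q}(x)}{\ell(x)}\,.
$$

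The heart of the matter is verifying the hypothesis $\Ex_{\mu_{\Prob}}\bigl(|\Delta_{\Prob}|f\bigr)<\infty$ of Lemma \ref{lem:lansit}. Unlike in Lemma \ref{lem:entropy}, the increments of $f$ change sign, so one cannot simply reduce to a series of non-negative terms. To get around this I would split $f=g-h$ with $g(x)=-\log_2\Q(\partial T_x)$ and $h(x)=-\log_2\Prob(\partial T_x)$, the latter being exactly the function used in the proof of Lemma \ref{lem:entropy}. Both $g$ and $h$ have non-negative edge increments, $g(y)-g(y^-)=-\log_2 q(y|y^-)\ge 0$ and $h(y)-h(y^-)=-\log_2 p(y|y^-)\ge 0$, so $|f(y)-f(y^-)|\le\bigl(g(y)-g(y^-)\bigr)+\bigl(h(y)-h(y^-)\bigr)$ and therefore $|\Delta_{\Prob}|f\le\Delta_{\Prob}g+\Delta_{\Prob}h$ pointwise on $\interior T$. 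Since $\Delta_{\Prob}g\ge 0$ and $\Delta_{\Prob}h\ge 0$, the computation in the proof of Lemma \ref{lem:lansit} applies to $g$ and to $h$ with both sides in $[0,\infty]$, giving $\ell(\Prob)\,\Ex_{\mu_{\Prob}}\bigl(\Delta_{\Prob}h\bigr)=\Ex_{\Prob}\bigl(h-h(o)\bigr)=\HH(\Prob)$ and
$$
\ell(\Prob)\,\Ex_{\mu_{\Prob}}\bigl(\Delta_{\Prob}g\bigr)=\Ex_{\Prob}\bigl(g-g(o)\bigr)=\sum_{v\in\partial T}\Prob(v)\log_2\frac1{\Q(v)}=\HH(\Prob)+\DD(\Prob\|\Q)\,,
$$
where the last equality is the splitting of the cross entropy via $\log_2\frac1{\Q(v)}=\log_2\frac1{\Prob(v)}+\log_2\frac{\Prob(v)}{\Q(v)}$, legitimate because $\HH(\Prob)<\infty$ and the series for $\DD(\Prob\|\Q)$ converges absolutely. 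Hence $\Ex_{\mu_{\Prob}}\bigl(|\Delta_{\Prob}|f\bigr)\le\bigl(2\HH(\Prob)+\DD(\Prob\|\Q)\bigr)/\ell(\Prob)<\infty$.

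With this in hand, Lemma \ref{lem:lansit} applied to $f$ gives $\frac1{\ell(\Prob)}\Ex_{\Prob}\bigl(f-f(o)\bigr)=\Ex_{\mu_{\Prob}}\bigl(\Delta_{\Prob}f\bigr)$, which by the computations above is precisely $\DD(\Prob\|\Q)/\ell(\Prob)=\Ex_{\mu_{\Prob}}\bigl(\DD_{p,q}(x)/\ell(x)\bigr)$. It remains to observe that $\mu_{\Prob}(x)=\frac{\ell(x)}{\ell(\Prob)}\Prob(\partial T_x)>0$ for every $x\in\interior T$, since the cone $\partial T_x$ contains at least one leaf and every leaf has positive $\Prob$-mass; consequently finiteness of $\Ex_{\mu_{\Prob}}\bigl(|\Delta_{\Prob}|f\bigr)$ forces $|\Delta_{\Prob}|f(x)<\infty$ for every $x$, which is exactly the asserted absolute convergence of the series defining $\DD_{p,q}(x)$.

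The step I expect to be the main obstacle is the integrability check in the second paragraph: the LANSIT is stated with the termwise-absolute quantity $|\Delta_{\Prob}|f$ rather than $|\Delta_{\Prob}f|$, so a genuine estimate is needed, and the decomposition $f=g-h$ into two edge-monotone pieces, each governed by an entropy or cross-entropy identity, is what makes it go through. Everything else is bookkeeping with the telescoping product \eqref{eq:P} and the definition \eqref{eq:nu} of $\mu_{\Prob}$.
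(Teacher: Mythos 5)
Your proof is correct and follows essentially the same route as the paper: both apply the LANSIT to $f(x)=\log_2\bigl(\Prob(\partial T_x)/\Q(\partial T_x)\bigr)$ and verify the hypothesis $\Ex_{\mu_{\Prob}}\bigl(|\Delta_{\Prob}|f\bigr)<\infty$ via the bound $\bigl|\log_2\bigl(p(y|y^-)/q(y|y^-)\bigr)\bigr|\le \log_2\frac{1}{p(y|y^-)}+\log_2\frac{1}{q(y|y^-)}$, telescoped along geodesics to an expression controlled by $\HH(\Prob)$ and the absolutely convergent series for $\DD(\Prob\|\Q)$. Your packaging of that estimate through the decomposition $f=g-h$ with edge-monotone pieces and the unconditional (nonnegative) form of the LANSIT is only a cosmetic variant of the paper's direct computation.
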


Again, $x$ is the variable with respect to which the expectation is taken on the right hand side.

\begin{proof} We want to apply Lemma \ref{lem:lansit} to the function
$f(x) = \log_2 \bigl(\Prob(\partial T_x) / \Q(\partial T_x)\bigr)$.
For this function, we compute
$$
\begin{aligned} 
\ell(\Prob)\, &\Ex_{\mu_{\Prob}}\bigl(|\Delta_{\Prob}|f\bigr) =
\sum_{x \in \interior T} \Prob(\partial T_x) 
\sum_{y \in N(x)} p(y|x) \left| \log_2 \frac{p(y|x)}{q(y|x)} \right|\\
&= \sum_{y \in T \setminus \{o\}} \sum_{v \in \partial T_y} \Prob(v) 
\left| \log_2 \frac{p(y|x)}{q(y|x)} \right|\\
&\le \sum_{v \in \partial T} \Prob(v) \sum_{y \in \pi(o,v) \setminus \{o\}}
\left(\log_2 \frac{1}{p(y|y^-)} + \log_2 \frac{1}{q(y|y^-)}\right)\\
&=\sum_{v \in \partial T} \Prob(v) \left(\log_2 \frac{1}{\Prob(v)} + \log_2 \frac{1}{\Q(v)}\right)\\
&\le \sum_{v \in \partial T} \Prob(v) \left(2 \, \log_2 \frac{1}{\Prob(v)} 
+ \left|\log_2 \frac{\Prob(v)}{\Q(v)}\right|\right) 
= 2\HH(\Prob) + \sum_{v \in \partial T} \Prob(v)  \left|\log_2 \frac{\Prob(v)}{\Q(v)}\right|, 
\end{aligned}
$$
which is finite by assumption. In particular, we get for each internal node $x$
that $|\Delta_{\Prob}|f(x) < \infty$, that is, $\DD_{p,q}(x)$ converges absolutely.  
\end{proof}

In the following two lemmas we state some 
elementary properties of the variational distance between two probability distributions, and
the defining function of the entropy.

\begin{lem}\label{lem:norm} For two probability distributions $\nu_1\,$, $\nu_2$ on any finite or 
countable set $\mathcal{X}$, let
$$
\| \nu_1 - \nu_2 \|_1 = \sum_{x \in \mathcal{X}} |\nu_1(x) - \nu_2(x)|
$$
%(i.e., twice their variational distance.) 
Then we have the following.
\\[8pt]
\emph{(a)} \hspace*{1cm}$\displaystyle
\sum_{x\,:\, \nu_1(x) > \nu_2(x)} \!\!\bigl(\nu_1(x) - \nu_2(x)\bigr) = 
\sum_{x\,:\, \nu_1(x) < \nu_2(x)} \!\!\bigl(\nu_2(x) - \nu_1(x)\bigr) = 
\frac{1}{2} \| \nu_1 - \nu_2 \|_1\,.
$
\\[8pt]
\emph{(b) [Pinsker's inequality]} \hspace*{1cm}
$\displaystyle \| \nu_1 - \nu_2 \|_1^2 \le 2 \ln 2 \,\, \DD(\nu_1 \| \nu_2)\,.$
\end{lem}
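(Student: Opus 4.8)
\textbf{Proof plan for Lemma \ref{lem:norm}.}

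For part (a), the plan is purely bookkeeping. Write $A = \{x : \nu_1(x) > \nu_2(x)\}$ and $B = \{x : \nu_1(x) < \nu_2(x)\}$, and set $P = \sum_{x \in A}(\nu_1(x)-\nu_2(x))$ and $N = \sum_{x \in B}(\nu_2(x)-\nu_1(x))$. Since both $\nu_1$ and $\nu_2$ are probability distributions on $\mathcal{X}$, we have $\sum_{x}(\nu_1(x)-\nu_2(x)) = 0$; splitting this sum over $A$, $B$, and the set where the two agree (which contributes nothing) gives $P - N = 0$, i.e. $P = N$. On the other hand $\|\nu_1 - \nu_2\|_1 = \sum_{x \in A}(\nu_1(x)-\nu_2(x)) + \sum_{x \in B}(\nu_2(x)-\nu_1(x)) = P + N = 2P$, which yields the claimed identities. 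The only mild point is that when $\mathcal{X}$ is countably infinite one should note that $\|\nu_1-\nu_2\|_1 \le 2$ is automatically finite, so all the rearrangements of these nonnegative series are legitimate; I would mention this in one clause.

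For part (b) I would give the standard proof of Pinsker's inequality. First reduce to the two-point case: given the partition $\mathcal{X} = A \sqcup B \sqcup (\text{rest})$ from part (a), define the coarsened two-point distributions $\bar\nu_i$ with masses $\bar\nu_i(A) = \nu_i(A)$ and $\bar\nu_i(A^c) = \nu_i(A^c)$. The data-processing (lumping) inequality for Kullback-Leibler divergence gives $\DD(\bar\nu_1 \| \bar\nu_2) \le \DD(\nu_1 \| \nu_2)$ — this follows from the log-sum inequality applied blockwise, or one can cite it — while $\|\bar\nu_1 - \bar\nu_2\|_1 = \|\nu_1 - \nu_2\|_1$ by part (a) (both equal $2(\nu_1(A)-\nu_2(A))$). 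So it suffices to prove the inequality for distributions on two points, i.e. to show that for $p, q \in (0,1)$,
$$
(2|p-q|)^2 \le 2\ln 2 \left( p \log_2 \frac{p}{q} + (1-p)\log_2 \frac{1-p}{1-q} \right),
$$
equivalently $4(p-q)^2 \le 2\bigl(p \ln \tfrac{p}{q} + (1-p)\ln\tfrac{1-p}{1-q}\bigr)$ after clearing the $\log_2$-to-$\ln$ factor. This single-variable inequality is proved by fixing $q$ and setting $g(p) = 2\bigl(p\ln\tfrac{p}{q} + (1-p)\ln\tfrac{1-p}{1-q}\bigr) - 4(p-q)^2$; one checks $g(q) = 0$, $g'(q) = 0$, and $g''(p) = \tfrac{2}{p(1-p)} - 8 \ge 8 - 8 = 0$ since $p(1-p) \le 1/4$, so $g \ge 0$ on $(0,1)$.

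The main obstacle, such as it is, is the reduction step: one must be slightly careful that the divergence can be infinite (if $\nu_2(x) = 0 < \nu_1(x)$ for some $x$), in which case the inequality is trivial, and otherwise the log-sum / lumping argument applies cleanly; in the countable case one also wants the blockwise log-sum inequality to hold for the possibly-infinite blocks $A$ and $B$, which is standard. Everything after the reduction is an elementary calculus exercise, so I would not dwell on it. If one prefers to avoid invoking data processing, an alternative is to bound the summand $\Prob(v)\log_2\tfrac{\Prob(v)}{\Q(v)}$-type terms directly, but the lumping route is cleanest and matches the way part (a) is set up.
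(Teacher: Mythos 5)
Your proof is correct, but it is worth noting that the paper does not actually prove this lemma at all: both parts are presented as known elementary facts, with part (b) — Pinsker's inequality with the sharp constant $2\ln 2$ — attributed to Csisz\'ar, Kemperman and Kullback and left to the literature (the surrounding text even points to \cite{CsTa}, \cite{SaVe} for refinements). So you have supplied a genuine argument where the authors supply a citation. Your argument is the standard one and it is sound: part (a) is the bookkeeping identity $P=N$ from $\sum_x(\nu_1(x)-\nu_2(x))=0$, with the correct remark that $\|\nu_1-\nu_2\|_1\le 2$ makes all rearrangements of the nonnegative series harmless; part (b) reduces to the two-point case via the blockwise log-sum (data-processing) inequality for the partition $\{A, A^c\}$, observes via (a) that the coarsening preserves the $L^1$ distance, dispenses with the case $\DD(\nu_1\|\nu_2)=\infty$, and finishes with the convexity computation $g''(p)=\tfrac{2}{p(1-p)}-8\ge 0$, $g(q)=g'(q)=0$, which indeed delivers the optimal constant — the same constant the paper needs, since the pre-factor $C$ in Theorem \ref{thm:compare} is built from it. The only point I would tidy is the boundary behaviour in the two-point case ($p\in\{0,1\}$ or $q\in\{0,1\}$), which follows by continuity or by the triviality of an infinite divergence; as written you restrict to $p,q\in(0,1)$ without saying so. In short: what the paper buys by citing is brevity for a classical fact; what your route buys is a self-contained proof of the sharp inequality, at the cost of half a page.
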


The optimal constant $2\ln 2$ is due to Csisz\'ar, Kemperman, and Kullback in independent
work of the 1960ies. There are also converse inequalities; see 
{\sc Csisz\'ar and Talata}~\cite{CsTa} and
{\sc Sason and Verd\'u}~\cite{SaVe} and the references given there. The following is elementary.

\begin{lem}\label{lem:phi}
The function $\varphi(t) = t \log_2(1/t)$, $t\in [0\,,\,1]$, with the convention 
$\varphi(0)=0$, has the following properties.
\begin{enumerate}
 \item[(i)] $\varphi$ is strictly concave and assumes its maximal value $1/(e\ln 2)$ 
at $t=1/e$.\\[-1pt]
\item[(ii)] For $0 \le t < t+\delta\le 1$, we have 
$|\varphi(t+\delta)-\varphi(t)| \le \max \{ \varphi(\delta), \varphi(1-\delta) \}$.\\[-1pt]
\item[(iii)] For $0 \le \delta\le 1/2$, we have 
$\max \{ \varphi(\delta), \varphi(1-\delta) \} = \varphi(\delta)$.\\[-1pt]
\item[(iv)] For $t, u \in [0\,,\,1/(2e)]$, we have 
$\varphi(t+u) \le 2\bigl(\varphi(t)+\varphi(u)\bigr)$.
\end{enumerate}
 \end{lem}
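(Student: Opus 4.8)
The plan is to handle the four items by elementary calculus, writing $\varphi(t)=-t\ln t/\ln 2$, which is $C^{\infty}$ on $(0,1]$ and continuous at $0$ with the stated convention. For (i) I would differentiate: $\varphi'(t)=-(\ln t+1)/\ln 2$ and $\varphi''(t)=-1/(t\ln 2)<0$ on $(0,1]$, which gives strict concavity; the unique zero of $\varphi'$ is $t=1/e$, where $\varphi$ attains its maximum $\varphi(1/e)=1/(e\ln 2)$. I would record for later use that $\varphi'>0$ on $(0,1/e)$, so that $\varphi$ is strictly increasing on $[0,1/e]$.

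For (ii), fix $\delta\in(0,1)$ (the cases $\delta\in\{0,1\}$ being trivial) and set $g(t)=\varphi(t+\delta)-\varphi(t)$ on $[0,1-\delta]$. It is continuous there, and on the open interval $g'(t)=\varphi'(t+\delta)-\varphi'(t)<0$ since $\varphi'$ is strictly decreasing by (i); hence $g$ is strictly decreasing, so $g(1-\delta)\le g(t)\le g(0)$, i.e.\ $-\varphi(1-\delta)\le g(t)\le\varphi(\delta)$, which yields $|g(t)|\le\max\{\varphi(\delta),\varphi(1-\delta)\}$ because both $\varphi(\delta)$ and $\varphi(1-\delta)$ are nonnegative.

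For (iii), put $h(\delta)=\varphi(\delta)-\varphi(1-\delta)$ on $[0,1/2]$; then $h''(\delta)=\varphi''(\delta)-\varphi''(1-\delta)=\bigl(1/(1-\delta)-1/\delta\bigr)/\ln 2<0$ for $0<\delta<1/2$, so $h$ is concave on $[0,1/2]$, and since $h(0)=h(1/2)=0$ it follows that $h\ge 0$ there, that is $\varphi(\delta)\ge\varphi(1-\delta)$. For (iv), assume without loss of generality that $u=\max\{t,u\}$, the case $u=0$ being trivial; then $t+u\le 2u\le 1/e$, so by the monotonicity from (i) we get $\varphi(t+u)\le\varphi(2u)$, and the identity $\varphi(2u)=2\varphi(u)-2u$ gives $\varphi(t+u)\le 2\varphi(u)-2u\le 2\varphi(u)\le 2\bigl(\varphi(t)+\varphi(u)\bigr)$.

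There is essentially no serious obstacle; the only point needing a little care is the behaviour at $t=0$, where $\varphi'$ blows up. In (ii) one should therefore argue that $g$ is continuous on the closed interval $[0,1-\delta]$ and has negative derivative on its interior---whence strictly decreasing---rather than evaluating $g'$ at the endpoint, and similarly in (iii) one uses continuity of $h$ at $0$ together with concavity on the open interval. All remaining steps are direct computations.
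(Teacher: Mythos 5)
Your proof is correct; the paper itself gives no argument for this lemma (it is simply labelled ``elementary''), so there is no official proof to compare against, and your calculus verification supplies exactly the kind of routine details the authors omit. All four parts check out: the concavity/monotonicity facts in (i), the monotone-difference argument for (ii) with the correct handling of the endpoint $t=0$, the concavity-plus-equal-endpoints argument for (iii), and in (iv) the identity $\varphi(2u)=2\varphi(u)-2u$ together with monotonicity of $\varphi$ on $[0,1/e]$, which indeed applies since $t+u\le 2u\le 1/e$.
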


%\begin{proof} (iv) Since $\varphi$ is increasing on $[0\,,\,1/e]$
%$$
%\varphi(t+u) \le \varphi\bigl(2 \max \{t,u\}\bigr) = \max \bigl\{ \varphi(2t), \varphi(2u) \bigr\}
%\le  \varphi(2t) +  \varphi(2u) = 2\bigl(\varphi(t)+\varphi(u)\bigr) - (t+u)\,.
%$$
%\end{proof}

 \medskip

\noindent\textbf{B. Tightness and uniform summability}

Given our tree, suppose that for each internal node $x$, we have a real valued function 
$y \mapsto g(y|x)$, defined on the successors $y$ of $x$. We are interested
in the series $\sum_{y\in N(x)} g(y|x)$. Then we say that the family of functions,
resp., the associated series are \emph{tight} or \emph{uniformly (absolutely) summable,}
if for every $\ep > 0$ there is an integer $M_{\ep} \ge 1$ such that for
each $x \in \interior T$, there is a subset $N_{\ep}(x) \subset N(x)$
such that
\begin{equation}\label{eq:tight}
 |N_{\ep}(x)| \le M_{\ep} \AND \sum_{y \in N(x) \setminus N_{\ep}(x)} \bigl|g(y|x)\bigr| < \ep\,.
\end{equation}
(The terminology ``tight'' is commonly used in presence of a family of probability
distributions.) As long as the tree is finite,
\eqref{eq:tight} is always valid. In this case, we need no $\ep$, i.e., we can replace
$N_{\ep}(x)$  by all of $N(x)$ and $M_{\ep}$ by 
\begin{equation}\label{eq:M}
M = \max \{ \dg(x): x \in \interior T\}.
\end{equation}
When there are only finitely many types of different functions
$g(\cdot|x)$ up to bijections between the sets $N(x)$, $x \in \interior T$, 
then \eqref{eq:tight} is equivalent with absolute convergence of the involved
series.

In our context, we shall want that all the sums
\begin{equation}\label{eq:sums}
%\begin{gathered}
\HH_p(x) = \sum_y p(y|x) \log \frac{1}{p(y|x)}\AND
\HH_q(x) = \sum_y q(y|x) \log \frac{1}{q(y|x)} %\AND
%\\
%\DD_{p,q}(x) = \sum_y p(y|x) \log \frac{p(y|x)}{q(y|x)}
%\end{gathered}
\end{equation}
are uniformly summable.  Uniform summability implies with a short computation that 
$\HH_q(x) \le \log_2 M_{\ep} + 1/ (e \ln 2) +\ep$. In particular, 
if $\inf \{\ell(x) : x \in \interior T\} > 0$ 
then $A = \sup \{\HH_q(x)/\ell(x) : x \in \interior T\} < \infty\,$. This will be needed further
below.
%An easy-to-formulate condition which guarantees all three summability conditions is
%that just the series
%\begin{equation}\label{eq:maxmin}
%\sum_y \max \{ p(y|x), q(y|x)\} \log_2 \frac{1}{\min \{ p(y|x), q(y|x)\}}\,,\quad x \in \interior T,
%\end{equation}
%are uniformly summable.

We %still 
want to formulate sufficient conditions for uniform summability which are easy to grasp. 
We can enumerate each set $N(x) = \{ y_n(x) : n \in \N\,, n \le \dg(x) \}$ (finite or countable) and set
$$
p_x(n) = p\bigl(y_n(x)|x) \;\text{ for }\; n \le \dg(x) 
\AND p_x(n) = 0 \;\text{ for }\; n > \dg(x)
$$
(the latter when $\dg(x)$ is finite). Then tightness of the family of distributions
$p_x(\cdot)$, $x \in \interior T$, means that we can do the above enumerations in such 
a way that
\begin{equation}\label{eq:tails}
\tau(k) = \sup \{\tau_x(k) : x \in \interior T\} \to 0 \;\text{ as }\; k \to \infty\,,
\quad\text{where}\quad \tau_x(k) = \sum_{n=k}^{\infty} p_x(n)\,.
\end{equation}
In this case, $p(n) = \tau(n) - \tau(n+1)$ defines a probability distribution on $\N$,
which \emph{stochastically dominates} each distribution $p_x$, that is, its \emph{tails}
$\tau(k)$ majorise the tails of all $p_x\,$.

\begin{lem}\label{lem:moment} Suppose that the distributions $p_x = p(\cdot| x)$ 
satisfy \eqref{eq:tails}
and that the dominating distribution $p$ has finite mean,
$$
\sum_{n=1}^{\infty} n\,p(n) < \infty\,.
$$
Then the entropies $\HH_p(x)$, $x \in \interior T$, are uniformly summable.
 
If in addition the two probability measures $\Prob$ and $\Q$ on $\partial T$ are such that
\begin{equation}\label{eq:equiv}
\frac{1}{c} \cdot \Q \le \Prob \le c \cdot \Q
\end{equation}
for some finite $c > 0$, then also the entropies $\HH_x(q)$, $x \in \interior T$, are uniformly summable.
%
%\eqref{eq:maxmin} holds, so that we have uniform (absolute) 
%summability in \eqref{eq:sums}. 
\end{lem}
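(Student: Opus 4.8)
\noindent\emph{Proof plan.}\ The plan is to prove the two assertions in turn, the second by reduction to the first. Fix $\ep>0$. For the first assertion, fix the enumerations $N(x)=\{y_n(x)\}$ witnessing \eqref{eq:tails}, write $p_x(n)=p(y_n(x)|x)$ and keep $\varphi(t)=t\log_2(1/t)$ from Lemma \ref{lem:phi}. For a $k$ to be chosen I would take $N_\ep(x)=\{y_1(x),\dots,y_{k-1}(x)\}\cap N(x)$, so that $|N_\ep(x)|\le k-1$ and everything reduces to bounding the tail entropy $T_k(p_x):=\sum_{n\ge k}\varphi(p_x(n))$ by $\ep$, uniformly in $x$, for all large $k$. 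A termwise estimate such as $\varphi(p_x(n))\le\varphi(\tau_x(n))\le\varphi(\tau(n))$ is too lossy, since $\sum_n\varphi(\tau(n))$ may diverge even when $\sum_n\tau(n)<\infty$. Instead I would condition $p_x$ on $\{k,k+1,\dots\}$: writing $\beta=\tau_x(k)$ and letting $r$ be the conditional law $p_x(\cdot)/\beta$ there, one has the exact identity
\[
T_k(p_x)=\varphi(\beta)+\beta\,\HH(r).
\]

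The next step is to control $\HH(r)$ through its mean. Since entropy is unchanged by shifting the support of $r$ to start at $1$, and the mean of that shift equals $\bigl(\sum_{n\ge k}\tau_x(n)\bigr)/\beta$ (``mean $=$ sum of tails''), it lies in $[\,1,\ \sigma(k)/\beta\,]$, where $\sigma(k):=\sum_{n\ge k}\tau(n)$; note $\sigma(k)\to0$ as $k\to\infty$, and this is exactly the place where finiteness of the mean $\sum_k\tau(k)=\sum_n n\,p(n)$ enters. By Gibbs' inequality --- nonnegativity of the Kullback--Leibler divergence, cf.\ Lemma \ref{lem:norm}(b) --- comparison of $r$ with the geometric distribution of the same mean $m\ge1$ gives $\HH(r)\le\log_2 m+1/\ln2\le\log_2(\sigma(k)/\beta)+1/\ln2$. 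Hence, once $k$ is large enough that $\sigma(k)\le1$ and $\tau(k)\le1/e$ (so that $\varphi$ is increasing on $[0,\tau(k)]$ by Lemma \ref{lem:phi}(i)),
\[
T_k(p_x)\ \le\ \varphi(\beta)+\beta\log_2\tfrac{\sigma(k)}{\beta}+\tfrac{\beta}{\ln2}\ =\ 2\varphi(\beta)+\beta\log_2\sigma(k)+\tfrac{\beta}{\ln2}\ \le\ 2\varphi\bigl(\tau(k)\bigr)+\tfrac{\tau(k)}{\ln2},
\]
using $\beta=\tau_x(k)\le\tau(k)$ and $\log_2\sigma(k)\le0$. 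The right-hand side is independent of $x$ and tends to $0$, so choosing $k=k(\ep)$ making it $<\ep$ and setting $M_\ep=k-1$ establishes \eqref{eq:tight} for the entropies $\HH_p(x)$; the degenerate cases $\tau_x(k)=0$ or $\dg(x)<k$ are trivial since then $T_k(p_x)=0$.

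For the second assertion, \eqref{eq:equiv} forces $c\ge1$, and summing the pointwise bounds over the leaves in a cone gives $\tfrac1c\Q(\partial T_x)\le\Prob(\partial T_x)\le c\,\Q(\partial T_x)$ for all $x$; dividing the bounds for $\partial T_y$ and $\partial T_x$ yields $q(y|x)\le c^2 p(y|x)$ for every $x\in\interior T$ and $y\in N(x)$. Keeping the enumerations of the first part, the $q$-tails then satisfy $\sum_{n\ge k}q(y_n(x)|x)\le c^2\,\tau_x(k)$, hence $\sup_{x}\sum_{n\ge k}q(y_n(x)|x)\le\min\{1,c^2\tau(k)\}=:\tau^\ast(k)$, and $\tau^\ast$ is non-increasing with $\tau^\ast(1)=1$, $\tau^\ast(k)\to0$, and $\sum_k\tau^\ast(k)\le c^2\sum_k\tau(k)<\infty$. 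Thus $\tau^\ast(n)-\tau^\ast(n+1)$ is a probability distribution of finite mean stochastically dominating every $q(\cdot|x)$, so the family $\{q(\cdot|x)\}_{x\in\interior T}$ satisfies the hypotheses of the first assertion, and uniform summability of the $\HH_q(x)$ follows from it.

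The main obstacle is the tail estimate in the first assertion: one must avoid the lossy termwise bound and instead extract the conditioning identity $T_k(p_x)=\varphi(\tau_x(k))+\tau_x(k)\,\HH(r)$ and estimate $\HH(r)$ via its conditional mean, which is where \emph{both} hypotheses --- tightness ($\tau(k)\to0$) and finite mean ($\sum_k\tau(k)<\infty$) --- are used. The second assertion is then a soft reduction to the first.
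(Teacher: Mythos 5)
Your proof is correct, but for the main (first) assertion it takes a genuinely different route from the paper. The paper argues termwise: using monotonicity of $\varphi$ on $[0,1/e]$ it bounds, for $n\ge 2$, $\varphi\bigl(p_x(n)\bigr)\le n\,2^{-n}$ when $p_x(n)\le 2^{-n}$ and $\varphi\bigl(p_x(n)\bigr)\le n\,p_x(n)$ otherwise, and then compares tail first moments by summation by parts, $\sum_{n\ge m} n\,p_x(n)=m\,\tau_x(m)+\sum_{k\ge m+1}\tau_x(k)\le\sum_{n\ge m} n\,p(n)$, so the entropy tail is at most $\sum_{n\ge m} n\bigl(2^{-n}+p(n)\bigr)$, which is independent of $x$ and tends to $0$. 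You instead use the exact conditioning identity $T_k(p_x)=\varphi\bigl(\tau_x(k)\bigr)+\tau_x(k)\,\HH(r)$ and control $\HH(r)$ by the maximum-entropy comparison with a geometric law, $\HH(r)\le\log_2 m+1/\ln 2$, bounding the conditional mean by $\sigma(k)/\tau_x(k)$ through the ``mean equals sum of tails'' identity; both arguments invoke the finite mean of the dominating distribution at the same point (finiteness of $\sum_k\tau(k)$), but the paper's termwise splitting is more elementary, while your route yields the cleaner uniform tail bound $2\varphi\bigl(\tau(k)\bigr)+\tau(k)/\ln 2$. (Your remark that the termwise bound $\varphi\bigl(p_x(n)\bigr)\le\varphi\bigl(\tau(n)\bigr)$ would be too lossy is true, but it does not affect the paper, which uses a different and sufficient termwise bound.) For the second assertion the paper simply declares it obvious; your reduction -- cone bounds from \eqref{eq:equiv} giving $q(y|x)\le c^2\,p(y|x)$, hence $q$-tails dominated by $\min\{1,c^2\tau(k)\}$, which again has finite tail-sum, so the first assertion applies to the family $q(\cdot|x)$ -- is exactly the intended argument and is correct.
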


\begin{proof} By Lemma \ref{lem:phi}, the function $\varphi(t)$ is monotone increasing
in the interval $[0\,,\,1/e]$. 
If $n \ge 2$ and $p_x(n) \le 2^{-n}$ then $\varphi\bigl(p_x(n)\bigr) \le n 2^{-n}$.
Otherwise, $\varphi\bigl(p_x(n)\bigr) \le n p_x(n)$. Choose $m \ge 2$.  We have
$$
\sum_{n=m}^{\infty} np_x(n) = m \tau_x(m) + \sum_{k=m+1}^{\infty} \tau_x(k)
\le m \tau(m) + \sum_{k=m+1}^{\infty} \tau(k) = \sum_{n=m}^{\infty} np(n)\,.
$$
Consequently, 
$$
\sum_{n=m}^{\infty} \varphi\bigl(p_x(n)\bigr) \le \sum_{n = m}^{\infty} n \bigl(2^{-n} + p(n)\bigr)\,,
$$
which tends to $0$ as $m \to \infty$. This proves uniform summability. The second part is obvious.
\end{proof}

%%%%%%%%%%%%%%%%%%%%%%%%%%%%%%%%%%%%%%%%%%%%%%%%%%%%%%%%%%%%%%%%%%%%%%
\section{A comparison theorem -- discussion and variations}\label{sec:compare}

Here is our first main result.

\begin{thm}\label{thm:compare}
Let $\Prob$ and $\Q$ be probability measures supported by all leaves in $\partial T$,
and let $p(\cdot|\cdot)$ and $q(\cdot|\cdot)$ be the transition kernels of the respective 
associated forward Markov chains. Suppose that $\HH(\Prob)$, $\HH(\Q)$ and $\DD(\Prob\|\Q)$ are
finite and that the defining series of $\HH_p(x)$ and $\HH_q(x)$, % and $\DD_{p,q}(x)$, 
$x \in \interior T$,
are uniformly summable (without loss of generality with the same value $M_{\ep}$ and sets 
$N_{\ep}(x)$ for every
$\ep > 0$). 

Then, for any choice of $\ep > 0$ and $0 < \delta < 1/2$,
$$
\left| \frac{\HH(\Prob)}{\ell(\Prob)} - \frac{\HH(\Q)}{\ell(\Q)}\right| \le
L \Bigl( 2\ep + M_{\ep}\,\varphi(\delta) \Bigr)
+ \frac{C\sqrt{L}}{\delta}\sqrt{\frac{\DD(\Prob\|\Q)}{\ell(\Prob)}}
+ \frac{A-a}{2} \, \|\mu_{\Prob} - \mu_{\Q}\|_1\,,
$$
where
\begin{align*}
L &= \frac{\ell_{\sharp}(\Prob)}{\ell(\Prob)}\,,\qquad C = \frac{2\sqrt{2}}{e\sqrt{\ln 2}} \approx 1.25\,,\\
A &= \sup\left\{ \frac{\HH_q(x)}{\ell(x)} : x \in \interior T\right\},\AND 
a = \inf\left\{ \frac{\HH_q(x)}{\ell(x)} : x \in \interior T\right\}.
\end{align*}
%
%\begin{align*}
%L &= \frac{\ell_{\sharp}(\Prob)}{\ell(\Prob)}\,,  &C &= \frac{2\sqrt{2}}{e\sqrt{\ln 2}} \approx 1.25\,,\\
%A &= \sup\left\{ \frac{\HH_q(x)}{\ell(x)} : x \in \interior T\right\},\quad\text{and}\hspace*{-1cm} 
%&a &= \inf\left\{ \frac{\HH_q(x)}{\ell(x)} : x \in \interior T\right\}\,.
%\end{align*}
When $T$ has bounded forward degrees, we can set $\ep =0$ and replace the $M_{\ep}$ 
of \eqref{eq:tight} by the $M$ of \eqref{eq:M}.
\end{thm}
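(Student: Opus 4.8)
The plan is to work entirely with the two LANSIT‑type representations of Lemma \ref{lem:entropy}, namely $\HH(\Prob)/\ell(\Prob) = \Ex_{\mu_{\Prob}}\bigl(\HH_p(x)/\ell(x)\bigr)$ and $\HH(\Q)/\ell(\Q) = \Ex_{\mu_{\Q}}\bigl(\HH_q(x)/\ell(x)\bigr)$, and to split the difference by inserting the intermediate quantity $\Ex_{\mu_{\Prob}}\bigl(\HH_q(x)/\ell(x)\bigr)$:
\[
\frac{\HH(\Prob)}{\ell(\Prob)} - \frac{\HH(\Q)}{\ell(\Q)}
= \Ex_{\mu_{\Prob}}\!\!\left(\frac{\HH_p(x) - \HH_q(x)}{\ell(x)}\right)
+ \bigl(\Ex_{\mu_{\Prob}} - \Ex_{\mu_{\Q}}\bigr)\!\!\left(\frac{\HH_q(x)}{\ell(x)}\right).
\]
The second summand is the easy one. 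Since $\HH_q(x)/\ell(x)\in[a,A]$ and both $\mu_{\Prob}$, $\mu_{\Q}$ are probability measures, I may subtract the constant $(A+a)/2$ and estimate crudely, obtaining $\bigl|(\Ex_{\mu_{\Prob}} - \Ex_{\mu_{\Q}})(\HH_q(x)/\ell(x))\bigr| \le \tfrac{A-a}{2}\,\|\mu_{\Prob}-\mu_{\Q}\|_1$, which is exactly the last term of the claimed bound.

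The heart of the proof is a pointwise estimate, for each $x\in\interior T$, of $|\HH_p(x) - \HH_q(x)| = \bigl|\sum_y\bigl(\varphi(p(y|x)) - \varphi(q(y|x))\bigr)\bigr|$, where $\varphi(t)=t\log_2(1/t)$. I would split $N(x)$ into $N_{\ep}(x)$ and its complement $N(x)\setminus N_{\ep}(x)$. By the assumed uniform summability of the entropy series, $\sum_{y\notin N_{\ep}(x)}\varphi(p(y|x))<\ep$ and likewise for $q$, so the tail contributes at most $2\ep$. On the finite set $N_{\ep}(x)$ (size $\le M_{\ep}$) I would distinguish, for each $y$, the cases $|p(y|x)-q(y|x)|\le\delta$ and $>\delta$. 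In the first case, Lemma \ref{lem:phi}(ii),(iii) (after ordering the two values) give $|\varphi(p(y|x))-\varphi(q(y|x))|\le\varphi(\delta)$, for a total of at most $M_{\ep}\varphi(\delta)$. In the second case there are at most $\|p(\cdot|x)-q(\cdot|x)\|_1/\delta$ such $y$, and for each of them $|\varphi(p(y|x))-\varphi(q(y|x))|\le \varphi(p(y|x))+\varphi(q(y|x))\le 2\max_t\varphi(t)=2/(e\ln 2)$ by Lemma \ref{lem:phi}(i); combined with Pinsker's inequality (Lemma \ref{lem:norm}(b)), $\|p(\cdot|x)-q(\cdot|x)\|_1\le\sqrt{2\ln 2\;\DD_{p,q}(x)}$, this part is at most $\tfrac{C}{\delta}\sqrt{\DD_{p,q}(x)}$ with $C = 2\sqrt2/(e\sqrt{\ln 2})$. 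Altogether
\[
|\HH_p(x) - \HH_q(x)| \;\le\; 2\ep + M_{\ep}\varphi(\delta) + \frac{C}{\delta}\sqrt{\DD_{p,q}(x)}.
\]

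Finally I would feed this bound into $\Ex_{\mu_{\Prob}}(\,\cdot\,/\ell(x))$, using two identities. First, unwinding the definition \eqref{eq:nu} of $\mu_{\Prob}$ and interchanging summation exactly as in the computation that follows \eqref{eq:nu} (each leaf $v$ is counted by the internal vertices on $\pi(o,v)$), one gets $\Ex_{\mu_{\Prob}}(1/\ell(x)) = \ell_{\sharp}(\Prob)/\ell(\Prob) = L$. Second, Proposition \ref{pro:div} gives $\Ex_{\mu_{\Prob}}(\DD_{p,q}(x)/\ell(x)) = \DD(\Prob\|\Q)/\ell(\Prob)$. Hence the constant‑and‑$\ep$ part of the pointwise bound contributes $L\bigl(2\ep + M_{\ep}\varphi(\delta)\bigr)$, while by the Cauchy--Schwarz (equivalently Jensen, $\sqrt{\,\cdot\,}$ concave) inequality the divergence part contributes $\tfrac{C}{\delta}\Ex_{\mu_{\Prob}}\bigl(\sqrt{\DD_{p,q}(x)}/\ell(x)\bigr)\le \tfrac{C}{\delta}\sqrt{\Ex_{\mu_{\Prob}}(1/\ell(x))}\cdot\sqrt{\Ex_{\mu_{\Prob}}(\DD_{p,q}(x)/\ell(x))} = \tfrac{C\sqrt L}{\delta}\sqrt{\DD(\Prob\|\Q)/\ell(\Prob)}$. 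Collecting the three contributions gives the asserted inequality. The bounded‑forward‑degree addendum is then immediate: in that case \eqref{eq:tight} holds with $N_{\ep}(x)=N(x)$ and $M_{\ep}=M$ for every $\ep>0$, so one simply lets $\ep\to0$. I expect the main obstacle to be the pointwise estimate of $|\HH_p(x)-\HH_q(x)|$ — in particular organizing the "far coordinate" count so that Pinsker's inequality delivers precisely the constant $C$, and making the tail estimates uniform in $x$ via the tightness hypothesis; the passage from the pointwise bound to the global one is then routine given Lemma \ref{lem:entropy} and Proposition \ref{pro:div}.
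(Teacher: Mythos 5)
Your proposal is correct and follows essentially the same route as the paper: the same three-way decomposition via the intermediate quantity $\Ex_{\mu_{\Prob}}\bigl(\HH_q(x)/\ell(x)\bigr)$, the same pointwise bound $|\HH_p(x)-\HH_q(x)|\le 2\ep+M_{\ep}\varphi(\delta)+\tfrac{2\max\varphi}{\delta}\|p(\cdot|x)-q(\cdot|x)\|_1$ (the paper's Lemma \ref{lem:Hp-Hq}), followed by Pinsker, the identity \eqref{eq:lsharp}, Proposition \ref{pro:div} and Cauchy--Schwarz. The only cosmetic differences are that you bound the second summand by centering $\HH_q(x)/\ell(x)$ at $(A+a)/2$ rather than invoking Lemma \ref{lem:norm}(a), and you organize the tail/close/far split of the local estimate slightly differently; both variants yield exactly the paper's terms.
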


The proof of the theorem is deferred to the Appendix.
Right now, we embark on a detailed discussion of various aspects and special cases 
of this inequality. 
As outlined in the Introduction, our way of thinking is that $\Q$ is the basic 
reference measure, while $\Prob$
may vary; it should be considered as a perturbation of $\Q$.

\medskip

%We now discuss Theorem \ref{thm:compare}.

%\smallskip

We say that two internal nodes $x$ and $x'$ have the \emph{same $\Q$-type,} if
there is a bijection $\psi: N(x) \to N(x')$ such that 
\begin{equation}\label{eq:type}
q(\psi y| x') = q(y|x) \quad \text{for all }\; y \in N(x)\,.
\end{equation}

\medskip

\noindent\textbf{A. Identical $\Q$-types}

\smallskip

The simplest, but most significant situation is the one where all internal nodes 
have the same $\Q$-type.
In this case, $\HH_q(x) = \HH_q$ is constant, and uniform summability of $\HH_q(x)$
reduces to finiteness of $\HH_q\,$. If in addition $\ell = \ell_{\sharp}$,
then $A=a$, and by Lemma \ref{lem:entropy}, one has $\HH(\Q) = \ell_{\sharp}(\Q)\cdot \HH_q\,$.
The inequality of Theorem \ref{thm:compare} becomes
\begin{equation}\label{eq:compare}
 \left| \frac{\HH(\Prob)}{\ell_{\sharp}(\Prob)} - \HH_q\right| \le
2\ep + M_{\ep}\,\varphi(\delta)
+ \frac{C}{\delta}\sqrt{\frac{\DD(\Prob\|\Q)}{\ell_{\sharp}(\Prob)}}\,.
\end{equation}
When $T$ is finite, so that we can set $\ep =0$ and replace $M_{\ep}$ by $M$, 
this is the main result of 
\cite{BoeAm}. \eqref{eq:compare} provides an extension to the case when all $N(x)$
are infinite and there is one $\Q$-type with $\HH_q < \infty$. Note that when in addition
the measures $\Prob$ and $\Q$ are comparable as in \eqref{eq:equiv} then the uniform
summability conditions of Theorem \ref{thm:compare} hold.
  
Suppose furthermore that $T$ has height $n$, that is, $|v|_{\sharp}=n$ for each leaf of $T$.
Then with suitable labelling, $\Q = q^{\otimes n} = q \otimes \cdots \otimes q$ is the $n$-fold product measure
of $q(x) := q(x|o)$, $x \in N(o)$, or in other words, it is the joint distribution of
$n$ independent random variables $X_1, \dots, X_n$ with common distribution $q(\cdot)$, and 
\eqref{eq:compare} becomes
\begin{equation}\label{eq:compare-n}
 \left| \frac{\HH(\Prob)}{n} - \HH_q\right| \le
2\ep + M_{\ep}\,\varphi(\delta)
+ \frac{C}{\delta}\sqrt{\frac{\DD(\Prob\|q^{\otimes n})}{n}}\,.
\end{equation}
Assume that for fixed $n$, we have a sequence $\Prob^{(k)}$, $k=1,2,,\dots$, each of which 
is a perturbation of the joint distribution $\Q$ of $(X_1, \dots, X_n)$ such that
$\DD(\Prob^{(k)}\|q^{\otimes n}) \to 0$ as $k \to \infty$. Then, under the stated conditions 
of uniform summabilty -- in particular,
when $\frac{1}{c} \cdot q^{\otimes n} \le \Prob^{(k)} \le c\cdot q^{\otimes n}$ -- 
$$
\lim_{k \to \infty} \frac{\HH(\Prob^{(k)})}{n} = \HH_q\,.
$$
We shall come back to this further below, considering the situation where $n \to \infty$.

\begin{comment}

\smallskip

To put \eqref{eq:compare-n} into perspective, consider the simple case where
also $\Prob = p^{\otimes n}$. Then 
$$
\frac{\HH(\Prob)}{n} - \frac{\HH(\Q)}{n} 
= \HH\bigl(p(\cdot)\bigr) - \HH\bigl(q(\cdot)\bigr) \AND \frac{\DD(\Prob\|\Q)}{n} =
\DD\bigl(p(\cdot)\|q(\cdot)\bigr)\,,
$$
and when $q$ happens to be uniformly distributed (on a finite set) then the
two quantities coincide. 
 
\end{comment}

\vfill\eject
%\medskip

\noindent\textbf{B. Variable $\Q$-types}

\smallskip

If $\HH_q(x)$ is not constant, then we may use the length function $\ell_q(x) = \HH_q(x)$ instead
of $\ell_{\sharp}\,$.
Again, $A = a$, and \eqref{eq:entro-length} -- applied to $\Q$ instead of $\Prob$ -- leads to
\begin{equation}\label{eq:compare-q}
 \left| \frac{\HH(\Prob)}{\ell_{q}(\Prob)} - 1\right| \le
2\ep + M_{\ep}\,\varphi(\delta)
+ \frac{C}{\delta}\sqrt{\frac{\DD(\Prob\|\Q)}{\ell_{q}(\Prob)}}\,.
\end{equation}
If $\HH_q(x)$ varies between two positive bounds, then $\ell_{q}(\Prob)/\ell_{\sharp}(\Prob)$
varies between the same bounds, i.e., the two expected lengths are of the same order of
magnitude.

\smallskip

Otherwise, e.g. if we want to stick to $\ell_{\sharp}\,$, we need to discuss the 
quantity $\frac{A-a}{2}\,\|\mu_{\Prob} - \mu_{\Q}\|_1 \,$, since
it remains present in the estimate of Theorem \ref{thm:compare}.
%First of all, we have the following upper bound:
%\begin{equation}\label{eq:upperbd}
% \frac{A-a}{2}\,\|\mu_{\Prob} - \mu_{\Q}\|_1
%\le \frac{A-a}{\ell(\Prob)} \sum_{v \in \partial T} |v|_{\ell}\, \bigl|\Prob(v) - \Q(v)\bigr|\,.
%\end{equation}
%
%Second, w
We can use the following variant.

\begin{lem}\label{lem:variant} Let $S$ be a cross section of $T$ (Def. \ref{def:section}),
and let
$$
A_S^* = \sup\left\{ \frac{\HH_q(x)}{\ell(x)} : x \in \interior T \setminus \interior T^S\right\},
\AND 
a_S^* = \inf\left\{ \frac{\HH_q(x)}{\ell(x)} : x \in \interior T \setminus \interior T^S\right\}.
$$
Then the term $\frac{A-a}{2} \, \|\mu_{\Prob} - \mu_{\Q}\|_1$
in the estimate of Theorem \ref{thm:compare} can be replaced with
$$
\frac{A_S^*-a_S^*}{2} \, \|\mu_{\Prob} - \mu_{\Q}\|_1 + 
(A-a) \,\max \left\{ \frac{\ell(\Prob_S)}{\ell(\Prob)}\,,\, 
                     \frac{\ell(\Q_S)}{\ell(\Q)} \right\}\,.
$$
\end{lem}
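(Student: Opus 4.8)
The plan is to isolate the single place where the term $\tfrac{A-a}{2}\,\|\mu_{\Prob}-\mu_{\Q}\|_1$ enters the proof of Theorem \ref{thm:compare}, namely the estimate of $\bigl|\Ex_{\mu_{\Prob}}(g)-\Ex_{\mu_{\Q}}(g)\bigr|$ for $g(x):=\HH_q(x)/\ell(x)$. Indeed, by Lemma \ref{lem:entropy} applied to $\Prob$ and to $\Q$,
$$
\frac{\HH(\Prob)}{\ell(\Prob)}-\frac{\HH(\Q)}{\ell(\Q)}
=\Ex_{\mu_{\Prob}}\!\Bigl(\frac{\HH_p(x)-\HH_q(x)}{\ell(x)}\Bigr)+\bigl(\Ex_{\mu_{\Prob}}(g)-\Ex_{\mu_{\Q}}(g)\bigr),
$$
where the first summand is what produces the first two terms of the estimate (via Pinsker's inequality and Lemma \ref{lem:phi}) and is left untouched. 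So the whole task is to prove
$$
\bigl|\Ex_{\mu_{\Prob}}(g)-\Ex_{\mu_{\Q}}(g)\bigr|\;\le\;\frac{A_S^*-a_S^*}{2}\,\|\mu_{\Prob}-\mu_{\Q}\|_1+(A-a)\,\max\Bigl\{\frac{\ell(\Prob_S)}{\ell(\Prob)},\,\frac{\ell(\Q_S)}{\ell(\Q)}\Bigr\},
$$
after which this bound simply replaces the old one in the proof of Theorem \ref{thm:compare}. One may assume $A<\infty$: otherwise the right-hand side is $+\infty$ when $\interior T\setminus\interior T^S\neq\interior T$ (because $\Prob$ charges every cone, so the $\max$ above is positive) and reduces to the estimate of Theorem \ref{thm:compare} when $\interior T^S=\emptyset$, which forces $A_S^*=A$, $a_S^*=a$.

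The one structural fact needed is the mass identity $\mu_{\Prob}(\interior T^S)=\ell(\Prob_S)/\ell(\Prob)=:P$ and, likewise, $\mu_{\Q}(\interior T^S)=\ell(\Q_S)/\ell(\Q)=:Q$. I would obtain it by the same interchange of summation as in the paragraph after \eqref{eq:nu}: expanding $\mu_{\Prob}(\interior T^S)=\ell(\Prob)^{-1}\sum_{x\in\interior T^S}\ell(x)\sum_{v\in\partial T_x}\Prob(v)$ and swapping the sums, each leaf $v$ contributes $\Prob(v)\sum\ell(x)$ over those $x\in\pi(o,v)$ lying in $\interior T^S$; the cross-section property forces these to be exactly the vertices of $\pi(o,v)$ strictly preceding the unique point $s(v)$ where $\pi(o,v)$ meets $S$, so the inner sum equals $|s(v)|_{\ell}$, and grouping the leaves $v$ by the value of $s(v)$ turns the double sum into $\sum_{s\in S}|s|_{\ell}\,\Prob(\partial T_s)=\ell(\Prob_S)$. (Equivalently, apply the probability-measure identity behind \eqref{eq:nu} to the finite subtree $T^S$ with boundary $S$ and measure $\Prob_S$, after checking $\Prob_S(\partial T^S_x)=\Prob(\partial T_x)$ for $x\in\interior T^S$.)

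With $m:=(A_S^*+a_S^*)/2$, and using that $\mu_{\Prob},\mu_{\Q}$ are probability measures on $\interior T$, I would write $\Ex_{\mu_{\Prob}}(g)-\Ex_{\mu_{\Q}}(g)=\sum_{x\in\interior T}(g(x)-m)(\mu_{\Prob}(x)-\mu_{\Q}(x))$ and split the sum over $\interior T\setminus\interior T^S$ and over $\interior T^S$. On $\interior T\setminus\interior T^S$ one has $g(x)\in[a_S^*,A_S^*]$, hence $|g(x)-m|\le(A_S^*-a_S^*)/2$, so that part is at most $\tfrac{A_S^*-a_S^*}{2}\|\mu_{\Prob}-\mu_{\Q}\|_1$. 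For the part over $\interior T^S$ I would not bound $|g(x)-m|$ crudely but keep the two sums separate: it equals $\sum_{x\in\interior T^S}g(x)\mu_{\Prob}(x)-\sum_{x\in\interior T^S}g(x)\mu_{\Q}(x)-m(P-Q)$, and since $a\le g\le A$ pointwise the first sum lies in $[aP,AP]$ and the second in $[aQ,AQ]$, so the whole part lies in $\bigl[-(m-a)P-(A-m)Q,\ (A-m)P+(m-a)Q\bigr]$. Because $A-m\ge0$, $m-a\ge0$ and $(A-m)+(m-a)=A-a$, the elementary inequality $\lambda P+\mu Q\le(\lambda+\mu)\max\{P,Q\}$ for $\lambda,\mu\ge0$ bounds this part by $(A-a)\max\{P,Q\}$. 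Adding the two parts yields the asserted inequality.

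The main obstacle is precisely this last step: the naive estimate $|g(x)-m|\le A-a$ on $\interior T^S$ would only give $(A-a)\sum_{x\in\interior T^S}|\mu_{\Prob}(x)-\mu_{\Q}(x)|\le(A-a)(P+Q)$, off by a factor of two from the claimed $(A-a)\max\{P,Q\}$. The trick is to retain $\sum_{x\in\interior T^S}g(x)\mu_{\Prob}(x)$ and $\sum_{x\in\interior T^S}g(x)\mu_{\Q}(x)$ as separate terms so that the convexity bound $\lambda P+\mu Q\le(\lambda+\mu)\max\{P,Q\}$ can be applied; everything else — the mass identity $\mu_{\Prob}(\interior T^S)=P$ and the $[a_S^*,A_S^*]$-centering on the complement — is routine once that idea is in place.
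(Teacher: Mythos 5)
Your proof is correct and follows essentially the same route as the paper: isolate the second sum $\Ex_{\mu_{\Prob}}(g)-\Ex_{\mu_{\Q}}(g)$ with $g=\HH_q/\ell$, split it at the cross section, use $A_S^*,a_S^*$ on $\interior T\setminus\interior T^S$ to obtain the half-variation term, and control the contribution of $\interior T^S$ through the masses $\mu_{\Prob}(\interior T^S)=\ell(\Prob_S)/\ell(\Prob)$ and $\mu_{\Q}(\interior T^S)=\ell(\Q_S)/\ell(\Q)$. The only difference is bookkeeping in the inside piece: the paper splits according to the sign of $\mu_{\Prob}-\mu_{\Q}$ and arrives at $(A-A_S^*)\,\mu_{\Prob}(\mathsf{Pos}_S)+(a_S^*-a)\,\mu_{\Q}(\mathsf{Neg}_S)$ before passing to the maximum, whereas you center $g$ at the midpoint $(A_S^*+a_S^*)/2$ and keep the two expectations over $\interior T^S$ separate -- both devices yield the same bound $(A-a)\max\bigl\{\ell(\Prob_S)/\ell(\Prob),\,\ell(\Q_S)/\ell(\Q)\bigr\}$.
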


This is useful, if $H_q(x)/\ell(x)$ is ``almost constant'' outside a small
sub-tree. By this we mean that there is a cross section $S$ such that $A_S^*-a_S^*$ is small,
and at the same time, $T$ is much larger that $T^S$, so that also 
$\ell(\Prob_S)/\ell(\Prob)$ and $\ell(\Q_S)/\ell(\Q)$ are small.
See below in \S \ref{sec:infinite}. The proof of the lemma is also deferred
to the Appendix.

We conclude this section with an example which shows that the term 
$\frac{A-a}{2}\,\|\mu_{\Prob} - \mu_{\Q}\|_1$ is indispensable
in the general estimate of Theorem \ref{thm:compare}.

\begin{exa}\label{ex:indisp}
We consider a tree $T=T^{(n)}$ of height $n+1$ whose root has $2$ forward neighbours,
so that the two branches $T_1$ and $T_2$ of height $n$ which are attached to $o$
have all inner nodes with forward degrees $d_1$ in $T_1$ and $d_2$ in $T_2\,$, respectively.
Then $\partial T = \partial T_1 \cup \partial T_2\,$.
For $0 < \theta < 1$, we define
$$
\Prob_{\theta}(v) = \Prob_{n, \theta}(v) =
                     \begin{cases} \theta/d_1^{\,n}\,,&\text{if}\; v \in \partial T_1\,,\\
                                   (1-\theta)/d_2^{\,n}\,,&\text{if}\; v \in \partial T_2\,.
                     \end{cases}
$$
With $\ell=\ell_{\sharp}$, we have $\ell(\Prob_{\theta}) = n+1$.
It is easy to compute
$$
\HH(\Prob_{\theta}) = \HH(\theta,1-\theta) + n
\Bigl(\theta \,\log_2 d_1 + (1-\theta)\,\log_2 d_2\Bigr)\,,
$$
where $\HH(\theta,1-\theta)$ is the entropy of the Bernoulli distribution with
parameter $\theta$.

Now we take $\Prob_n = \Prob_{n,\theta}\,$, where $\theta \ne 1/2$, and $\Q_n = \Prob_{n, 1/2}\,$.
Then it is also easy to compute 
$$
\DD(\Prob_n\|\Q_n) = 1 - H(\theta,1-\theta)
$$
As $n \to \infty$, we have that 
$$
\frac{\DD(\Prob_n\|\Q_n)}{n} \to 0\,, 
\quad \text{while}\quad \left| \frac{\HH(\Prob_n)}{n} - \frac{\HH(\Q_n)}{n} \right| 
\to \bigl|\bigl(\theta -  \tfrac{1}{2}\bigr)\log_2 d_1 
    + \bigl(1- \theta -  \tfrac{1}{2}\bigr)\log_2 d_2\bigr|\,,
$$
and the latter limit is non-zero for appropriate choices of $\theta$, $d_1$ and $d_2\,$.
Thus, there can be no improved general estimate that can remove the term 
$\frac{A-a}{2}\,\|\mu_{\Prob} - \mu_{\Q}\|_1 \,$ or replace it by something of significantly
smaller order of magnitude.\qed
\end{exa}

The last example can be seen in terms of two stationary stochastic processes on denumerable
state spaces. We first toss a coin to decide which of the two processes to run, and
then ``forget'' about the other one. The reference measure $\Q$ corresponds to a fair
coin toss, while $\Prob$ corresponds to a biased one.  Then we cannot compare their 
entropy rates just in terms of their Kullback-Leibler divergence, as 
we see from the above example. 
%Entropy rates will be featured in the next section.
 
\section{Trees with infinite geodesics}\label{sec:infinite}

Now, and only in this last chapter before the Annex, we consider the situation where 
our tree does have infinite geodesics.
In their presence, let us describe the boundary $\partial T$ of $T$.
It consists of eventual leaves (vertices $v \ne o$ with no forward
neighbour) plus the \emph{ends} of $T$. Each end $w$ is represented by an
infinite geodesic $\pi(o,w)$ starting at $o$. 
%(but when we have antoher geodesic
%which differs from the given one only by a finite initial piece, then 
%it represents the same end). 
We can put a metric on 
$\wh T = \interior T \cup \partial T$: for any pair of distinct points 
$v, w \in \wh T$, their \emph{confluent} $v \wedge w$ is the last common
vertex on the geodesics $\pi(o,v)$ and $\pi(o,w)$.
Then the metric is
$$
d(v,w) = \begin{cases} 0\,,\;&\text{if}\; v=w\,,\\
                       2^{-|v\wedge w|} \,,\;&\text{if}\; v\ne w\,. 
         \end{cases}
$$
Thus, a sequence $(x_n)$ of vertices of $T$ converges to an end $w$,
if $|x_n \wedge w| \to \infty\,$. For more details in a context 
close to the present note, see e.g. \cite[Ch.9.B]{WMarkov}.
Note that in many typical examples, such as the infinite binary tree,
$\partial T$ is a Cantor set, whence uncountable.

For $x \in \interior T$, the branch $T_x$ and its boundary $\partial T_x$ are 
defined as before. The boundary now may contain ends as well as leaves, and the sets
of all  $\partial T_x\,$, $x \in \interior T$, are the open (and also closed!)
balls for the metric on $\partial T$. Any probability measure
$\Prob$ on $\partial T$ is given via consistent definition
of the values $\Prob(\partial T_x)\,$, $x \in \interior T$. That is, we 
need to have
$$
\Prob(\partial T)=1 \AND \sum_{y \in N(x)} \Prob(\partial T_y) = \Prob(\partial T_x)
\;\text{ for every } x \in \interior T\,. 
$$

\begin{exa}\label{ex:proc}
The most typical class of examples arises from a stochastic process $(X_n)_{n \ge 1}$
on a finite or countable state space $\mathfrak{S}$. In this case, for $n \in \N$,
let 
$$
p_n(s_1, \dots, s_n) = \Prob[X_1 = s_1\,,\dots, X_n=s_n]\,,\quad s_1, \dots, s_n \in \mathfrak{S}
$$
be the joint distribution of $(X_1, \dots, X_n)$. Then we let (the vertex set of) our tree
be 
\begin{equation}\label{eq:traj}
T = \{ s_1s_2\cdots s_n : n \ge 0\,,\; s_i \in \mathfrak{S}\,,\; p_n(s_1, \dots, s_n) > 0 \}\,.
\end{equation}
For $n=0$, this includes the \emph{empty word} $o$, for which $p_0(o)=1$.
The predecessor vertex of $x=s_1\cdots s_n$ ($n\ge 1$) is $x^- = s_1 \cdots s_{n-1}$,
and
$$
p(x| x^-) = \Prob[X_1 = s_1\,,\dots, X_{n-1}=s_{n-1}\,,X_n=s_n \mid X_1 = s_1\,,\dots, X_{n-1}=s_{n-1}]\,,
$$
while for $x = s \in N(o)$, we have $p(x|o) = \Prob[ X_1=x]$. 
In this case, $T$ has no leaves, the end space 
$$
\partial T = \{ s_1s_2s_3\cdots\, : \,p_n(s_1, \dots, s_n) > 0 \; \text{for each}\; n \}
$$
is the space of active trajectories of our process, and $\Prob$ is the distribution
of that process.\qed
\end{exa}

In general, we cannot speak directly about the entropy of $\Prob$ on $\partial T$. We can
take any section $S$ of $T$ and consider $\HH(\Prob_S)$, as in Definition \ref{def:section}
and the subsequent lines. We restrict attention to the
following sequence of sections with respect to the standard length function:
$$
S(n) = \{ x \in T : |x| = n \} \cup \{ v \in \partial T : |v| < n \}.
$$ 
The second part appears when $T$ has leaves.  We write 
$$
\Prob_n = \Prob_{S(n)} \quad\text{and}\quad \Q_n = \Q_{S(n)}
$$ and assume that $\HH(\Prob_n)$ and $\DD(\Prob_n\|\Q_n)$ are finite (absolutely
convergent) for each $n$. Then we can speak about the quantity
$$
h= h\bigl(\Prob,\ell) = \lim_{n \to \infty} \frac{\HH(\Prob_n)}{\ell(\Prob_n)}\,,
$$
whenever the limit exists. This is a variant of the entropy rate.
In particular, in the situation of Example \ref{ex:proc}, for the standard length function,
$h(\Prob) = h(\Prob,\ell_{\sharp})$ is indeed the classical  entropy rate of 
the process $(X_n)$ with distribution $\Prob$. 
%
%\smallskip
%
In any case, relative entropy is well defined; see {\sc Gray}~\cite[\S 7.1]{Gr}:
\begin{equation}\label{eq:DPQ-gen}
\DD(\Prob\|\Q) = \lim_{n \to \infty}  \DD(\Prob_n\|\Q_n)
\end{equation}
Indeed, one can apply the well-known log-sum inequality --
see {\sc Cover and Thomas}~\cite[Thm. 2.7.1]{CoTh} --
to verify that  $\DD(\Prob_n\|\Q_n)$ is increasing with $n$. Of course,
the limit may be infinite. For the proof of the next theorem, see once more the Appendix.

\begin{thm}\label{thm:asy}
Let $T$ be a countable tree with infinite geodesics and $\Prob$ and $\Q$
two probability measures giving positive mass to the set of ends of $T$ and satisfying
our tightness requirements for the local entropies at the inner vertices. 
Assume that
\begin{itemize}
 \item[(i)] The length function $\ell$ is comparable with $\ell_{\sharp}\,$, that is
$0 < c_1 \le \ell(x) \le c_2 < \infty$ for suitable $c_1, c_2$ and all $x \in \interior T$,  
\item[(ii)] the limit $\;\displaystyle h = \lim_{|x| \to \infty} \HH_q(x)/\ell(x)\;$
exists and is finite, and 
\item[(iii)] for the relative entropies,  $\; \displaystyle \lim_{n \to \infty}
\DD(\Prob_n\|\Q_n)/n =0$. 
\end{itemize}
Then
$$
%\lim_{n \to \infty} \left|\frac{\HH(\Prob_n)}{\ell(\Prob_n)} 
%      - \frac{\HH(\Q_n)}{\ell(\Q_n)}\right| = 0\,.
h(\Prob,\ell) = h(\Q,\ell) = h\,.
$$
\end{thm}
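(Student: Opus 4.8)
The plan is to apply Theorem~\ref{thm:compare} to each of the sections $S(n)$ in turn, treating the finite tree $T^{S(n)}$ with root $o$, and then let $n \to \infty$. Fix a reference cross section $S = S(m)$ for some large $m$; I will eventually let $m \to \infty$ as well, but the two limits are taken in the right order. First I would record that, by hypothesis~(i), the standard length and the given length are comparable, so that $L = \ell_{\sharp}(\Prob_n)/\ell(\Prob_n)$ stays between $1/c_2$ and $1/c_1$, and similarly for $\Q$; and by hypothesis~(ii) together with tightness (recall the remark after~\eqref{eq:sums} that uniform summability plus $\inf_x \ell(x) > 0$ forces $A = \sup_x \HH_q(x)/\ell(x) < \infty$), all the constants $A$, $a$, $A_S^*$, $a_S^*$ appearing in Theorem~\ref{thm:compare} and Lemma~\ref{lem:variant} are finite. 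The point of hypothesis~(ii) is exactly that $A_{S(m)}^* - a_{S(m)}^* \to 0$ as $m \to \infty$, since outside $T^{S(m)}$ every inner node $x$ has $|x| \ge m$, so $\HH_q(x)/\ell(x)$ is within $o(1)$ of $h$.

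The key steps, in order, are as follows. \emph{Step 1.} Apply Theorem~\ref{thm:compare} on $T^{S(n)}$ with the term $\frac{A-a}{2}\|\mu_{\Prob} - \mu_{\Q}\|_1$ replaced, via Lemma~\ref{lem:variant} with the cross section $S(m)$, by
$$
\frac{A_{S(m)}^* - a_{S(m)}^*}{2}\,\|\mu_{\Prob_n} - \mu_{\Q_n}\|_1 + (A - a)\max\left\{\frac{\ell(\Prob_{n,S(m)})}{\ell(\Prob_n)}\,,\, \frac{\ell(\Q_{n,S(m)})}{\ell(\Q_n)}\right\},
$$
valid for all $n > m$. \emph{Step 2.} Divide through: since $\ell(\Prob_n)$ and $\ell(\Q_n)$ grow like $n$ (they lie between $c_1 n$ and $c_2 n$ by~(i), as every leaf of $S(n)$ has height $n$, up to the genuine leaves of $T$ which I would argue contribute a negligible or controlled amount — or simply note $|x|_\ell \le c_2 n$ for $x \in S(n)$), the second summand above is $O(m/n) \to 0$ as $n \to \infty$ for fixed $m$, because $\ell(\Prob_{n,S(m)}) \le c_2 m$. \emph{Step 3.} The middle term of Theorem~\ref{thm:compare} is $\frac{C\sqrt{L}}{\delta}\sqrt{\DD(\Prob_n\|\Q_n)/\ell(\Prob_n)}$, and $\DD(\Prob_n\|\Q_n)/\ell(\Prob_n) \le \DD(\Prob_n\|\Q_n)/(c_1 n) \to 0$ by hypothesis~(iii). \emph{Step 4.} Using $\|\mu_{\Prob_n} - \mu_{\Q_n}\|_1 \le 2$ trivially, the surviving term from Lemma~\ref{lem:variant} is at most $A_{S(m)}^* - a_{S(m)}^*$. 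Putting Steps~2–4 together, for each fixed $\ep, \delta, m$,
$$
\limsup_{n \to \infty}\left|\frac{\HH(\Prob_n)}{\ell(\Prob_n)} - \frac{\HH(\Q_n)}{\ell(\Q_n)}\right| \le \frac{1}{c_1}\bigl(2\ep + M_\ep\,\varphi(\delta)\bigr) + \bigl(A_{S(m)}^* - a_{S(m)}^*\bigr).
$$
Now send $m \to \infty$ (killing the last term by~(ii)), then $\delta \to 0$ and $\ep \to 0$ (using $\varphi(\delta) \to 0$ and, in the bounded-degree case, $\ep = 0$ with $M_\ep = M$ fixed; in the general case one must check that $M_\ep \varphi(\delta)$ can be made small — choose $\delta$ small after fixing $\ep$, since $M_\ep$ is then a fixed integer). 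Hence $\HH(\Prob_n)/\ell(\Prob_n) - \HH(\Q_n)/\ell(\Q_n) \to 0$.

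\emph{Step 5 (identifying the common limit with $h$).} It remains to show the limit exists and equals $h$. For $\Q$, apply Lemma~\ref{lem:entropy} on $T^{S(n)}$: $\HH(\Q_n)/\ell(\Q_n) = \Ex_{\mu_{\Q_n}}(\HH_q(x)/\ell(x))$, an average of the quantities $\HH_q(x)/\ell(x)$ over inner nodes $x \in \interior T^{S(n)}$. By hypothesis~(ii) these quantities converge to $h$ as $|x| \to \infty$, and the averaging measure $\mu_{\Q_n}$ puts mass concentrated on nodes of height comparable to $n$ (more precisely, the contribution of any fixed ball $T^{S(m)}$ to $\mu_{\Q_n}$ is $O(m/n)$, as in Step~2), so a standard averaging argument — split $\interior T^{S(n)}$ into $\interior T^{S(m)}$ and its complement, bound the first part's $\mu_{\Q_n}$-mass by $O(m/n)$ and the second part's integrand within $o_m(1)$ of $h$ — gives $\HH(\Q_n)/\ell(\Q_n) \to h$. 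Therefore $h(\Q,\ell) = h$, and combined with Step~4, $h(\Prob,\ell) = h$ as well.

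\emph{Main obstacle.} The delicate point is Step~5 and the book-keeping in Step~2: one must verify carefully that the node-average measures $\mu_{\Q_n}$ (and $\mu_{\Prob_n}$) do not retain appreciable mass near the root as $n \to \infty$, i.e. that $\ell(\Q_{n,S(m)})/\ell(\Q_n) \to 0$ for fixed $m$. This follows from $c_1 n \le \ell(\Q_n)$ and $\ell(\Q_{n,S(m)}) \le c_2 m$, but the presence of genuine leaves of $T$ at heights below $n$ (the second part of $S(n)$) requires a small additional remark — their total $\Q$-mass and length contribution is subsumed in $\Q_m$ for $m$ large, hence controlled by the same $O(m/n)$ estimate. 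A secondary technical nuisance is the interchange of the limits $n \to \infty$ and $m \to \infty$ (and $\delta \to 0$, $\ep \to 0$): the argument above takes them strictly in the order $n$, then $m$, then $\delta$, then $\ep$, which is exactly what the double-indexed estimate permits.
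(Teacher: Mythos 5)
Your proof is correct and essentially identical to the paper's: both apply Theorem \ref{thm:compare} combined with the Lemma \ref{lem:variant} variant to the finite sections $T^{S(n)}$ relative to a fixed cut $S(m)$, let $n\to\infty$ first and $m\to\infty$ afterwards (the paper merely streamlines your $\delta$-bookkeeping by choosing $\delta=\delta_n=\bigl(\DD(\Prob_n\|\Q_n)/\ell(\Prob_n)\bigr)^{1/4}$, so that both $\delta$-terms vanish in the single limit $n\to\infty$), and then identify the common limit with $h$ by the same splitting of $\Ex_{\mu_{\Q_n}}\bigl(\HH_q(x)/\ell(x)\bigr)$ over $\interior T^{S(m)}$ and its complement. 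The only point to repair is your claim $c_1 n\le \ell(\Prob_n)$, which can fail when leaves of height below $n$ carry mass (and your ``subsumed in $\Q_m$'' remark does not fix it); the intended estimate is $\ell(\Prob_n)\ge c_1\,\ell_{\sharp}(\Prob_n)\ge c_1\, n\,\Prob(\partial_{\infty}T)$ --- this is exactly where the hypothesis that $\Prob$ and $\Q$ give positive mass to the ends enters, and it supplies the linear growth of the denominators that your Steps 2, 3 and 5 require.
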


Note that condition (iii)  holds in particular
when $\DD(\Prob\|\Q) < \infty\,$.

\begin{app}\label{app:proc-cont} We take up Example \ref{ex:proc}, with the difference
that we now write $\Q$ for the distribution on the trajectory space of
our random process $(X_n)$. We have identified that space with the space of ends
of a rooted tree $T$ which has no leaves. 
We take the natural length function $\ell_{\sharp}$ and assume that $H_q(x) = h$ is constant.
Then the standard entropy rate $h(\Q)$ exists and coincides with $h$.

Now we consider a perturbation $(\wt X_n)$ of $(X_n)$ with overall distribution $\Prob$.
In this situation, Theorem \ref{thm:asy} yields that if $\DD(\Prob_n\|\Q_n)/n \to 0$,
and in particular if $\DD(\Prob\|\Q) < \infty$,  then 
also the entropy rate of $(\wt X_n)$ exists and is equal to the same $h$.

Typical cases where this can be used are as follows.

\smallskip\noindent
%$\bullet\;$ 
\textbf{(a)}$\;$
As in \eqref{eq:compare-n}, let $(X_n)$ be a sequence of i.i.d. discrete
random variables with common distribution $q(\cdot)$. Then $\Q = q^{\otimes \N}$ 
is the infinite product measure of countably many copies of $q$, and the
entropy rate is $h(\Q) = \HH(q)$. If we have a perturbed process $(\wt X_n)$
whose overall distribution $\Prob$ satisfies  $\DD(\Prob_n\|q^{\otimes n})/n \to 0$
then also $h(\Prob) = \HH(q)$. (When $q(\cdot)$ is finitely supported, this is
covered by \cite{BoeAm}.)

The following specific example is related to the famous paper of {\sc Kakutani}~\cite{Ka}
on absolute continuity versus orthogonality of infinite product measures. We let $q(\cdot)$
be uniform distribution on $\{ 1, \dots, M \}$ and $\Q = q^{\otimes \N}$. The associated
tree is the $M$-ary tree, where each vertex has $M$ successors. For $\alpha \in (0\,,\,1)$,
we define $p_{\alpha}(\cdot)$ on $\{ 1, \dots, M \}$ by 
$$
p_{\alpha}(1) = \frac{1+\alpha}{M}\,,\; p_{\alpha}(2) = \frac{1-\alpha}{M}\,,\quad
\text{and}\quad p_{\alpha}(s) = \frac{1}{M} \; \text{ for } s = 3,\dots, M\,.
$$
Then
$$
\DD(p_{\alpha}\|q) = \HH(q) - \HH(p_{\alpha}) = \frac{f(\alpha)}{M}\,,\quad\text{where}\quad
f(\alpha) = (1+\alpha)\,\log_2(1+\alpha) + (1-\alpha)\,\log_2(1-\alpha).
$$
We note that $f(\alpha)/\alpha^2 \to \ln 2$ as $\alpha \to 0$.
Now we take a sequence $(\alpha_n)$ and the perturbed measure 
$
\Prob = \bigotimes\limits_{n=1}^{\infty} p_{\alpha_n}\,.
$
Then 
$$
\DD(\Prob_n\|\Q_n) = \HH(\Q_n) - \HH(\Prob_n) 
= \frac{f(\alpha_1) + \dots + f(\alpha_n)}{M} 
$$
If $\alpha_n \to 0$ then $\DD(\Prob_n\|\Q_n)/n \to 0$, whence
$h(\Prob) = h(\Q) = \log_2 M$. If in particular $\alpha_n = n^{-\beta}$ with $\beta > 0$
then we see that $\DD(\Prob\|\Q) < \infty$ precisely when $\beta \ge 1/2$.  

\smallskip\noindent
\textbf{(b)}$\;$
Next, let $(X_n)$ be a Markov chain on a finite or countable
state space $\mathfrak{S}$ with transition matrix 
$\mathfrak{Q} = \bigl(\mathfrak{q}(s'|s)\bigr)_{s,s' \in \mathfrak{S}}$
and initial distribution $\nu$ on $\mathfrak{S}$.
Denote once more by $\Q$ -- instead of $\Prob$ as in Example \ref{ex:proc} --
the overall probability distribution of the process on its trajectory space,
which we interpret as the boundary at infinity of our tree $T$. In this case,
as in \eqref{ex:proc} but with $q_n$ instead of $p_n$, we have
$$
q_n(s_1\,,\dots, s_n) = \mathfrak{q}(s_n|s_{n-1})\cdots\mathfrak{q}(s_2|s_1) \nu(s_1)\,.
$$
In particular, for a node $x=s_1 \cdots s_{n-1}s_n$ of our tree, with $n \ge 2$,
we have $q(x|x^-) = \mathfrak{q}(s_n|s_{n-1}).$

Now suppose that the transition matrix is such that all its rows $\mathsf{q}(\cdot|s)$
have the same finite entropy $h$. Then $h(\Q)=h$. Again, if we have a perturbation
$(\wt X_n)$ of $(X_n)$ with overall distribution $\Prob$ such that our tightness
requirements are met and $\DD(\Prob_n\|\Q_n)/n \to 0$, then
also $h(\Prob)=h$.

\smallskip\noindent
\textbf{(c)}$\;$A specific example concerning the last situation is that of a
finite or infinite oriented graph with vertex set $\mathfrak{S}$ which is $\mathsf{d}$-regular,
that is, every vertex has $\mathsf{d}$ outgoing edges. For our Markov chain, we
take \emph{simple random walk} which at each vertex moves with equal probability
to one of those neighbours. Then $h = \log_2 \mathsf{d}$ and all of the above
applies. 

There are many further examples of Markov chains where $\HH\bigl(\mathsf{q}(\cdot|s)\bigr)$
is constant, in particular those where all rows $\mathsf{q}(\cdot|s)$
of the transition matrix are permutations of each other.

\smallskip

If more generally $H_q(x)$ is not constant, then we can use $\ell(x) = H_q(x)$ as in
\eqref{eq:compare-q}, and when $\DD(\Prob\|\Q) < \infty$, or just $\DD(\Prob_n\|\Q_n)/n \to 0$
we get that 
$$
a \le \liminf_{n \to \infty} \frac{1}{n}H(\wt X_1\,,\dots, \wt X_n) 
\le \limsup_{n \to \infty} \frac{1}{n}H(\wt X_1\,,\dots, \wt X_n) \le A\,,
$$
where $a$ and $A$ are as in Theorem \ref{thm:compare}.
In this situation, it may also be of interest to note that if 
$h(\Prob) = \lim_{n \to \infty} \frac{1}{n}H(\wt X_1\,,\dots, \wt X_n)$ exists,
then 
$$
h(\Prob) = \lim_{n\to \infty}  \frac{\ell_q(\Prob_n)}{n}\,,
$$
which follows once more from \eqref{eq:compare-q}.
\qed
\end{app}

\begin{rmk}\label{rmk:fin}
All results stated so far adapt easily to the situation where
the reference measure $\Q$ is supported by the whole of $\partial T$,
while the probability measure $\Prob$ is supported by a possibly
strict subset. In this situation, we may have $p(x|x^-)=0$ for some
edges of $T$, and when $\partial T$ consists only of leaves as in \S \ref{sec:local} --
\S \ref{sec:compare}, the node-average measure $\mu_{\Prob}$ will only live
on the induced sub-tree of $T$ which is spanned by all vertices that
can be reached from the root. In the presence of infinite geodesics, the situation is
analogous, considering the measure $\mu_{\Prob_n}$ on $T^{S(n)}$.
\end{rmk}

The following is our final application.

\smallskip\noindent
\textbf{Random perturbations.} We start once more
with the setting of Example \ref{ex:proc}, that is, the
infinite tree is as in \eqref{eq:traj}, and its boundary is equipped with our reference 
probability measure $\Q$. At the internal vertices we have the transition probabilities 
$q(\cdot|x)$ and a length function $\ell(x)$. 

$\Q$ describes the basic stochastic process which then is randomly perturbed: 
there is a second collection of transition probabilities $q'(\cdot|x)$ at each internal
vertex, supported by some or all forward neighbours of $x$. 
We assume that both families of probability distributions $q(\cdot|x)$ and $q'(\cdot|x)$, 
$x \in T$, are tight and that their entropies are uniformly summable;
see Lemma \ref{lem:moment} for a sufficient condition.

Now we consider a sequence of \emph{random variables} $\dde_n\,$, $n \ge 0$,
taking their values in the interval $[0\,,\,1]$, defined on a separate probability space.
Our perturbed random process is given 
via the forward transition probabilities
\begin{equation}\label{eq:random}
p(y|x) = (1-\dde_n)\,q(y|x) + \dde_n\, q'(y|x)\,, \quad\text{if }\; y \in S(n)\,.
\end{equation}
Note that this defines \emph{random} transition probabilities, so that we get a 
randomised process: at time $n$ -- when $x \in S(n)$ -- the moving particle 
chooses to proceed according to the ``wrong'' forward transition probabilities $q'(\cdot|x)$ 
with probability $\dde_n\,$, while it follows the ``correct'' transition rule with the
complementary probability. We write $\Prob$ for the \emph{random} probability measure
on $\partial T$ induced by $p(\cdot|\cdot)$, and as before $\Prob_n = \Prob_{S(n)}\,$.
The  proof of the following is again contained in the Appendix.

\begin{thm}\label{thm:randper}
In the setting of \eqref{eq:random} with tight families of forward transition probabilities
$q(\cdot|\cdot)$ and $q'(\cdot|\cdot)$ as well as uniformly summable local entropies,
assume that
\begin{itemize}
 \item[(i)] The length function $\ell$ is comparable with $\ell_{\sharp}\,$, \\[-1pt]
%that is
%$0 < c_1 \le \ell(x) \le c_2 < \infty$ for suitable $c_1, c_2$ and all $x \in \interior T$,  
\item[(ii)] the limit $\;\displaystyle h = \lim_{|x| \to \infty} \HH_q(x)/\ell(x)\;$
exists and is finite, and \\[-1pt]
\item[(iii)] there is $D < \infty$ such that $\DD_{q',q}(x) < D$ for all $x \in T$.
\end{itemize}
Under these conditions, if $\displaystyle \; \lim_{n \to \infty} \Ex(\dde_n) = 0 
\; \text{ then }\; 
\lim_{n \to \infty} \frac{\HH(\Prob_n)}{\ell(\Prob_n)} = h \; \text{ in probability.}
$\\[3pt]
Moreover, if $\;\dde_n \to 0\;$ almost surely, then 
$\;h(\Prob,\ell) = h\;$ almost surely, i.e., for almost every realisation of the sequence
$(\dde_n)$. 
\end{thm}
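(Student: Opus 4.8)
The plan is to reduce the statement to an application of Theorem \ref{thm:asy}, treating the randomness in $(\dde_n)$ as a parameter. Fix a realisation of the sequence $(\dde_n)$. This determines the transition probabilities $p(\cdot|x)$ via \eqref{eq:random}, hence the measure $\Prob$ on $\partial T$ and the cut-measures $\Prob_n = \Prob_{S(n)}$. The first step is to check that, for \emph{every} realisation, $\Prob$ and $\Q$ satisfy the structural hypotheses of Theorem \ref{thm:asy}: the length function is comparable with $\ell_\sharp$ by (i); the local entropies $\HH_p(x)$ are uniformly summable because $p(\cdot|x)$ is a convex combination of $q(\cdot|x)$ and $q'(\cdot|x)$, both of which are tight with uniformly summable entropies, and tightness is stable under taking convex combinations (the tail bound \eqref{eq:tails} for the mixture is dominated by the maximum of the two tails, up to the factor $2$); condition (ii) holds verbatim since it only involves $\Q$; and $\Prob$ gives positive mass to the ends because $p(y|x) \geq (1-\dde_n) q(y|x)$, so no active trajectory is killed as long as $\dde_n < 1$ — if $\dde_n = 1$ for some $n$ one must argue slightly differently, but since we only care about $\dde_n \to 0$ we may assume $\dde_n < 1$ for all large $n$ and absorb the finitely many exceptional levels.

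The crux is the third hypothesis of Theorem \ref{thm:asy}, namely $\DD(\Prob_n\|\Q_n)/n \to 0$. Here I would use Proposition \ref{pro:div} in the form adapted to the cut $S(n)$: writing things out along $T^{S(n)}$,
$$
\DD(\Prob_n\|\Q_n) = \sum_{x \in \interior T^{S(n)}} \Prob(\partial T_x)\,\DD_{p,q}(x)\,.
$$
Now for $x \in S(m) \cap \interior T$ (so $|x| = m$, $m < n$), the row $p(\cdot|x) = (1-\dde_m) q(\cdot|x) + \dde_m q'(\cdot|x)$ is a mixture, and by convexity of relative entropy in its first argument,
$$
\DD_{p,q}(x) = \DD\bigl((1-\dde_m)q(\cdot|x) + \dde_m q'(\cdot|x)\,\big\|\,q(\cdot|x)\bigr)
\le (1-\dde_m)\cdot 0 + \dde_m\,\DD_{q',q}(x) \le \dde_m D\,,
$$
using hypothesis (iii). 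Summing the contributions level by level, and using that $\sum_{x \in S(m)} \Prob(\partial T_x) = 1$ (it is a cut), I get
$$
\DD(\Prob_n\|\Q_n) \le D \sum_{m=0}^{n-1} \dde_m\,.
$$
Hence $\DD(\Prob_n\|\Q_n)/n \le D \cdot \frac{1}{n}\sum_{m=0}^{n-1}\dde_m$.

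From this uniform bound, the two conclusions follow by standard probabilistic arguments. For the in-probability statement: $\Ex\bigl(\DD(\Prob_n\|\Q_n)/n\bigr) \le D \cdot \frac1n\sum_{m=0}^{n-1}\Ex(\dde_n) \to 0$ by Cesàro, since $\Ex(\dde_n)\to 0$; as $\DD(\Prob_n\|\Q_n) \ge 0$, Markov's inequality gives $\DD(\Prob_n\|\Q_n)/n \to 0$ in probability, and then passing to a subsequence along which it converges almost surely and applying the deterministic Theorem \ref{thm:asy} (for that realisation) yields $\HH(\Prob_n)/\ell(\Prob_n) \to h$ along the subsequence; since this holds for every subsequence, it holds in probability. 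For the almost-sure statement: if $\dde_n \to 0$ a.s., then on that full-measure event the Cesàro averages $\frac1n\sum_{m=0}^{n-1}\dde_m \to 0$, so $\DD(\Prob_n\|\Q_n)/n \to 0$, and Theorem \ref{thm:asy} applied pathwise gives $h(\Prob,\ell) = h$.

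The main obstacle I anticipate is not any single estimate but the bookkeeping needed to verify that the tightness/uniform-summability hypotheses of Theorem \ref{thm:asy} genuinely transfer to the \emph{random} measure $\Prob$ for every realisation — in particular producing explicit common constants $M_\ep$ and sets $N_\ep(x)$ for the mixture $p(\cdot|x)$ out of those for $q$ and $q'$, using Lemma \ref{lem:phi}(iv) to control $\varphi$ of a sum. Once that is in place, the convexity bound on $\DD_{p,q}(x)$ and the reduction to Theorem \ref{thm:asy} are essentially immediate.
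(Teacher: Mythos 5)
Your proposal is correct and takes essentially the same route as the paper: verify uniform summability of the random local entropies $\HH_p(x)$ via tightness of $q,q'$ and Lemma \ref{lem:phi}(iv), bound $\DD_{p,q}(x)\le \dde_{|x|}\,D$ by convexity of the divergence together with (iii), sum level by level to get $\DD(\Prob_n\|\Q_n)/n \le D\cdot\frac{1}{n}\sum_{m}\dde_m$, and conclude with Ces\`aro and Markov for the in-probability statement and pathwise for the almost-sure one. The only cosmetic difference is that the paper reruns the proof of Theorem \ref{thm:asy} with the (now random) quantity $\DD(\Prob_n\|\Q_n)/\ell(\Prob_n)$, whereas you invoke Theorem \ref{thm:asy} pathwise along subsequences — which is fine, since the per-$n$ estimate in its proof immediately yields the needed subsequence version.
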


\begin{exa}\label{ex:last} Suppose that $\dde_n$ is a Bernoulli random variable
 with $\Prob[\dde_n=1] = 1/n^{\beta}$ and $\Prob[\dde_n=0]= 1 - 1/n^{\beta}\,$,
where $\beta > 0$. Then $\Ex(\dde_n) \to 0$, so that we get convergence 
in probability in Theorem \ref{thm:randper}. If $\beta > 1$ then by the 
Borel-Cantelli Lemma we have with probability $1$ that $\dde_n=0$ for all
but a (random) finite number of $n$. In this case, we get
that $h(\Prob,\ell)=h(\Q,\ell)$ almost surely.
\end{exa}

The general picture that we have in mind regarding random perturbations with $\dde_n \to 0$
is that $\Q$ relates to a given process which one wants to simulate. The simulation is then 
a process with overall distribution $\Prob$, and at each step, the error in the simulation
is described via $q'(\cdot|\cdot)$ and $\dde_n$ as in \eqref{eq:random}. Then $\dde_n \to 0$
means that in the course of time, there is a learning effect, so that the simulation
gets better and better.

\section{Appendix}\label{sec:appendix}

In this appendix, we present the proof details of Theorem \ref{thm:compare},
Lemma \ref{lem:variant}, as well as theorems \ref{thm:asy} and \ref{thm:randper}.
For the first proof, we need the following, always under the tightness assumptions of 
Theorem \ref{thm:compare}.

\begin{lem}\label{lem:Hp-Hq}
For $\ep > 0$, $0 < \delta \le 1/2$ and $x \in \interior T$,
$$
\bigl|\HH_p(x) - \HH_q(x)\bigr| \le 2\ep + M_{\ep}\,\varphi(\delta) +
\frac{2}{e\ln 2} \frac{1}{\delta}\, \bigl\| p(\cdot|x)-q(\cdot|x)\bigr\|_1\,.
$$
\end{lem}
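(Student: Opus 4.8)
The plan is to reduce everything to a pointwise estimate for the function $\varphi(t)=t\log_2(1/t)$ applied coordinate by coordinate, and then to split the sum over $N(x)$ into the finite ``bulk'' $N_\ep(x)$ and its tail. Under the tightness hypotheses inherited from Theorem \ref{thm:compare}, both $\HH_p(x)$ and $\HH_q(x)$ are finite, so I may write
$$
\HH_p(x)-\HH_q(x)=\sum_{y\in N_\ep(x)}\bigl(\varphi(p(y|x))-\varphi(q(y|x))\bigr)+\sum_{y\in N(x)\setminus N_\ep(x)}\bigl(\varphi(p(y|x))-\varphi(q(y|x))\bigr),
$$
where $N_\ep(x)$ is the set from \eqref{eq:tight}, chosen (without loss of generality) simultaneously for the $p$- and $q$-entropy series and of size at most $M_\ep$.

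The tail is the easy half. Since $\varphi\ge 0$, the summand $\varphi(p(y|x))$ is exactly the absolute value of the corresponding entropy term $g(y|x)$ in \eqref{eq:tight}, and likewise for $q$; hence $\sum_{y\notin N_\ep(x)}\varphi(p(y|x))<\ep$ and $\sum_{y\notin N_\ep(x)}\varphi(q(y|x))<\ep$, so by the triangle inequality the tail sum is at most $2\ep$ in absolute value.

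For the bulk, the crux is the pointwise claim that for all $a,b\in[0,1]$ and $0<\delta\le 1/2$,
$$
\bigl|\varphi(a)-\varphi(b)\bigr|\le\varphi(\delta)+\frac{2}{e\ln 2}\cdot\frac{|a-b|}{\delta}.
$$
I would prove this by splitting on the size of $h:=|a-b|$. If $h\ge\delta/2$, one simply uses that $\varphi$ takes values in $[0,1/(e\ln 2)]$ by Lemma \ref{lem:phi}(i), so $|\varphi(a)-\varphi(b)|\le 1/(e\ln 2)\le \frac{2}{e\ln 2}\cdot\frac{h}{\delta}$. If $h<\delta/2$, assume $a<b$ and $h=b-a<\delta/2\le 1/4<1/e$; Lemma \ref{lem:phi}(ii) together with (iii) gives $|\varphi(b)-\varphi(a)|\le\varphi(h)$, and $\varphi(h)\le\varphi(\delta)$ follows from the shape of $\varphi$ (monotone on $[0,1/e]$; and when $\delta>1/e$ one has $\varphi(\delta)\ge\varphi(1/2)=1/2=\varphi(1/4)\ge\varphi(h)$). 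This little case check is the only fiddly point. Granting the pointwise claim, summing it over $y\in N_\ep(x)$ gives $M_\ep\,\varphi(\delta)+\frac{2}{e\ln 2}\,\frac1\delta\sum_{y\in N_\ep(x)}|p(y|x)-q(y|x)|\le M_\ep\,\varphi(\delta)+\frac{2}{e\ln 2}\,\frac1\delta\,\|p(\cdot|x)-q(\cdot|x)\|_1$; adding the tail bound $2\ep$ yields the lemma.

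I expect the only real obstacle to be the pointwise inequality, and inside it the sub-case $\delta>1/e$, where $\varphi(\delta)$ dips slightly below the global maximum $1/(e\ln 2)$; the choice of threshold $\delta/2$ rather than $\delta$ is exactly what makes the two cases meet with the clean constant $2/(e\ln 2)$. The rest is bookkeeping with the triangle inequality and the uniform-summability hypothesis \eqref{eq:tight}.
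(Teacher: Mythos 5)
Your argument is correct and gives exactly the stated bound, but it is organized differently from the paper's proof. The paper splits the sum over $y\in N(x)$ according to whether $|p(y|x)-q(y|x)|$ is $<\delta$ or $\ge\delta$: small-difference terms are bounded by $\varphi(\delta)$ via Lemma \ref{lem:phi}(ii)--(iii) when $y\in N_{\ep}(x)$ and absorbed into $2\ep$ by tightness otherwise, while each large-difference term is bounded by $2\max\varphi=\tfrac{2}{e\ln 2}$ and hence by $\tfrac{2}{e\ln 2}\tfrac{1}{\delta}\,|p(y|x)-q(y|x)|$. You instead peel off the tail $N(x)\setminus N_{\ep}(x)$ first (contributing $2\ep$) and then apply to every bulk term the single pointwise inequality $|\varphi(a)-\varphi(b)|\le\varphi(\delta)+\tfrac{2}{e\ln 2}\tfrac{|a-b|}{\delta}$, proved by a case split at $|a-b|=\delta/2$ rather than at $\delta$. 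The ingredients (tightness, Lemma \ref{lem:phi}, the global maximum of $\varphi$, the $\ell^1$ norm) and the final constant are identical, so neither route is quantitatively stronger; but your $\delta/2$ threshold buys a cleaner treatment of the regime $\delta\in(1/e,1/2]$. There an increment $h<\delta$ can have $\varphi(h)>\varphi(\delta)$ (take $p(y|x)$ tiny, $q(y|x)=1/e$, $\delta=1/2$), so the bound $\varphi(\delta)$ for the small-difference terms is not a literal consequence of Lemma \ref{lem:phi}(ii)--(iii) alone and requires exactly the extra check you carry out: $h<\delta/2\le 1/4$ forces $\varphi(h)\le\varphi(1/4)=1/2\le\varphi(\delta)$. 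In the applications (Theorems \ref{thm:asy} and \ref{thm:randper}) $\delta$ is small, so this regime is harmless there, but your packaging makes the lemma watertight over the whole stated range $0<\delta\le 1/2$.
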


\begin{proof} 
If $y \in N(x)$ is such that $\bigl|p(y|x)-q(y|x)\bigr| < \delta$, then Lemma \ref{lem:phi} yields
$$
\big|\varphi\bigl(p(y|x)\bigr)-\varphi\bigl(q(y|x)\bigr)\big| \le \varphi(\delta).
$$
In view of this,
$$
%\begin{aligned}
\bigl|\HH_p(x) - \HH_q(x)\bigr| \le \sum_{y \in N(x)}  \big|\varphi\bigl(p(y|x)\bigr)-\varphi\bigl(q(y|x)\bigr)\big|
= \text{Sum}_1 + \text{Sum}_2
$$
with 
$$
\begin{aligned}
\text{Sum}_1 
&= \sum_{ y\,:\, |p(y|x)-q(y|x)| < \delta } \big|\varphi\bigl(p(y|x)\bigr)-\varphi\bigl(q(y|x)\bigr)\big|\\
&\le \sum_{ y \in N(x) \setminus N_{\ep}(x)} \Bigl(\big|\varphi\bigl(p(y|x)\bigr)\big|
+ \big|\varphi\bigl(q(y|x)\bigr)\big|\Bigr)
+ \sum_{ y \in N_{\ep}(x)} \varphi(\delta)\\[5pt]
&< 2\ep + M_{\ep}\,\varphi(\delta)\,.
\end{aligned}
$$
Next,
$$
\begin{aligned}
\text{Sum}_2 
&= \sum_{ y\,:\, |p(y|x)-q(y|x)| \ge \delta } 
\big|\varphi\bigl(p(y|x)\bigr)-\varphi\bigl(q(y|x)\bigr)\big|\\
&\le \sum_{ y\,:\, |p(y|x)-q(y|x)| \ge \delta } \frac{2\max \varphi}{\delta} \,\delta\\
&\le \frac{2\max \varphi}{\delta}  \sum_{ y\,:\, |p(y|x)-q(y|x)| \ge \delta } \big|p(y|x)-q(y|x)\big|\\
&\le \frac{2\max \varphi}{\delta} \,\bigl\| p(\cdot|x)-q(\cdot|x)\bigr\|_1\,,
\end{aligned}
$$
completing the proof.
\end{proof}

\begin{proof}[\rm \textbf{Proof of Theorem \ref{thm:compare}}]
We start by observing that
\begin{equation}\label{eq:lsharp}
\sum_{x \in \interior T} \frac{\mu_{\Prob}(x)}{\ell(x)} = \frac{\ell_{\sharp}(\Prob)}{\ell(\Prob)}. 
\end{equation}
Using Lemma \ref{lem:entropy}, we write
$$
\begin{aligned}
\left| \frac{\HH(\Prob)}{\ell(\Prob)} - \frac{\HH(\Q)}{\ell(\Q)}\right|
&= \left| \sum_{x \in \interior T} \left(\mu_{\Prob}(x)\frac{\HH_p(x)}{\ell(x)}
-\mu_{\Q}(x)\frac{\HH_q(x)}{\ell(x)}\right)\right|\\
&\le 
\sum_{x \in \interior T} \frac{\mu_{\Prob}(x)}{\ell(x)}\,\bigl|\HH_p(x) - \HH_q(x)\bigr| \;+\; 
\left| \sum_{x \in \interior T}  \frac{\HH_q(x)}{\ell(x)}\,
\bigl(\mu_{\Prob}(x)- \mu_{\Q}(x)\bigr)\right|\\[4pt]
&= \hspace*{2.2cm}\text{Sum}_{I}\hspace*{1.75cm} + \hspace*{1.75cm} |\text{Sum}_{I\!\!I}|
\end{aligned}
$$
For the following sequence of estimates to bound $\text{Sum}_{I}\,$, in (1) we use 
Lemma \ref{lem:Hp-Hq},
in (2) equation \eqref{eq:lsharp} and Lemma \ref{lem:norm}(b), in (3)
the Cauchy-Schwarz inequality, and in (4) once more \eqref{eq:lsharp}, 
as well as Proposition \ref{pro:div},
$$
\begin{aligned}
 \text{Sum}_{I} 
&\stackrel{(1)}{\le} \sum_{x \in \interior T} \frac{\mu_{\Prob}(x)}{\ell(x)}\,\Bigl(2\ep + M_{\ep}\,\varphi(\delta)\Bigr)
\;+\; \sum_{x \in \interior T} \frac{\mu_{\Prob}(x)}{\ell(x)}\,\frac{2\max \varphi}{\delta}\,
                     \big\|p(\cdot|x) - q(\cdot|x)\big\|_1 \\
&\stackrel{(2)}{=} \frac{\ell_{\sharp}(\Prob)}{\ell(\Prob)}\Bigl( 2\ep + M_{\ep}\,\varphi(\delta)\Bigr)
 \;+\; \frac{C}{\delta}
\sum_{x \in \interior T} \sqrt{\frac{\mu_{\Prob}(x)}{\ell(x)}}\,
\sqrt{\frac{\mu_{\Prob}(x)}{\ell(x)}\,\DD_{p,q}(x)}\\
&\stackrel{(3)}{\le}
\frac{\ell_{\sharp}(\Prob)}{\ell(\Prob)}\Bigl( 2\ep + M_{\ep}\,\varphi(\delta)\Bigr)
 \;+\; \frac{C}{\delta}
\sqrt{ \sum_{x \in \interior T} \frac{\mu_{\Prob}(x)}{\ell(x)} }
\sqrt{ \sum_{x \in \interior T} \frac{\mu_{\Prob}(x)}{\ell(x)}\DD_{p,q}(x) }\\
&\stackrel{(4)}{=} \frac{\ell_{\sharp}(\Prob)}{\ell(\Prob)}\Bigl( 2\ep + M_{\ep}\,\varphi(\delta)\Bigr)
 \;+\; \frac{C}{\delta}\sqrt{ \frac{\ell_{\sharp}(\Prob)}{\ell(\Prob)} }
   \sqrt{ \frac{\DD(\Prob\|\Q)}{\ell(\Prob)} }.
\end{aligned}
$$
Next, by Lemma \ref{lem:norm}(a),
$$
\begin{aligned}
&\sum_{x \in \interior T}  \frac{\HH_q(x)}{\ell(x)}\,
\bigl(\mu_{\Prob}(x)- \mu_{\Q}(x)\bigr) \\
&\;
\le \sum_{x \,:\, \mu_{\Prob}(x) > \mu_{\Q}(x)} \!\!A\,\bigl(\mu_{\Prob}(x)- \mu_{\Q}(x)\bigr)
+ \sum_{x \,:\, \mu_{\Prob}(x) < \mu_{\Q}(x)} \!\!a\, \bigl(\mu_{\Prob}(x)- \mu_{\Q}(x)\bigr) %\\
%&
= \frac{A-a}{2} \, \|\mu_{\Prob} - \mu_{\Q}\|_1 \,.
\end{aligned}
$$
Exchanging the roles of $\mu_{\Prob}$ and $\mu_{\Q}\,$, we get the same upper bound.
This proves that $|\text{Sum}_{I\!\!I}| \le \frac{A-a}{2} \|\mu_{\Prob} - \mu_{\Q}\|_1 \,$, 
concluding the proof.
\end{proof}

At this point we remark that one might want to refrain from deriving an
upper bound on $\textrm{Sum}_{I}$ that involves Pinsker's inequality, as in the proof
of Theorem \ref{thm:compare}: instead one could work directly with $\textrm{Sum}_{I}\,$.
However, recall from above that there are converses to Pinsker's inequality, so that one
cannot expect big improvements from such a change of proof strategy. 
We next turn to the lemma which allows us to use asymptotic constantness of
the length-normalised local entropies $\HH_q(x)/\ell(x)$ in Theorem \ref{thm:asy}.

\begin{proof}[\rm \textbf{Proof of Lemma \ref{lem:variant}}]
We need to modify the estimate of the term $|\text{Sum}_{I\!\!I}|$ in the proof of 
Theorem \ref{thm:compare}. Consider the sets 
$\textsf{Pos} = \{x \in \interior T : \mu_{\Prob}(x) > \mu_{\Q}(x)\}$
and $\textsf{Neg} = \{x \in \interior T : \mu_{\Prob}(x) > \mu_{\Q}(x)\}$, as well as
$\textsf{Pos}_S = \textsf{Pos} \cap \interior T^S$ and 
$\textsf{Neg}_S = \textsf{Neg} \cap \interior T^S$. We split
$$
\begin{aligned} 
 \sum_{x \in \interior T}  &\frac{\HH_q(x)}{\ell(x)}\,
\bigl(\mu_{\Prob}(x)- \mu_{\Q}(x)\bigr) \\
&\le \sum_{x \in \textsf{Pos}_S} A\,\bigl(\mu_{\Prob}(x)- \mu_{\Q}(x)\bigr)
+ \sum_{x \in \textsf{Neg}_S} a\,\bigl(\mu_{\Prob}(x)- \mu_{\Q}(x)\bigr)\\
&\quad + \sum_{x \in \textsf{Pos} \setminus \textsf{Pos}_S} A_S^*\,\bigl(\mu_{\Prob}(x)- \mu_{\Q}(x)\bigr)
+ \sum_{x \in \textsf{Neg} \setminus \textsf{Neg}_S} a_S^*\,\bigl(\mu_{\Prob}(x)- \mu_{\Q}(x)\bigr) \\
&= A_S^* \sum_{x \in \textsf{Pos}} \bigl(\mu_{\Prob}(x)- \mu_{\Q}(x)\bigr)
+ a_S^* \sum_{x \in \textsf{Neg}} \bigl(\mu_{\Prob}(x)- \mu_{\Q}(x)\bigr) \\
&\quad+ (A - A_S^*)\sum_{x \in \textsf{Pos}_S} \bigl(\mu_{\Prob}(x)- \mu_{\Q}(x)\bigr)
+ (a - a_S^*)\sum_{x \in \textsf{Neg}_S} \bigl(\mu_{\Prob}(x)- \mu_{\Q}(x)\bigr) \\
&\le  \frac{A_S^*-a_S^*}{2} \, \|\mu_{\Prob} - \mu_{\Q}\|_1
+ (A - A_S^*)\,\mu_{\Prob}(\textsf{Pos}_S) + (a_S^* -a) \mu_{\Q}(\textsf{Neg}_S)\,.
\end{aligned}
$$
Now, by the definition \eqref{eq:nu} of $\mu_{\Prob}\,$, 
$$
\mu_{\Prob}(\textsf{Pos}_S) = \frac{\ell(\Prob_S)}{\ell(\Prob)}\,\mu_{\Prob_S}\!(\textsf{Pos}_S) 
\le \frac{\ell(\Prob_S)}{\ell(\Prob)}, \quad\text{and analogously}\quad
\mu_{\Q}(\textsf{Neg}_S) \le \frac{\ell(\Q_S)}{\ell(\Q)}\,.
$$
This and an exchange of the roles of $\Prob$ and $\Q$ lead to the stated upper bound.
\end{proof}

At last, we come to the proofs of the two theorems of Section \ref{sec:infinite}.

\begin{proof}[\rm \textbf{Proof of Theorem \ref{thm:asy}}]
Write $\partial_{\infty}T$ for the space of ends of $T$. By assumption, 
$\Prob(\partial_{\infty} T) > 0$. Therefore
\begin{equation}\label{ref:to-infty}
\ell_{\sharp}(\Prob_n) \ge n \,\Prob(\partial_{\infty} T) \to \infty\,, \quad\text{as}\; n \to \infty\,.
\end{equation}
The same applies to $\Q$. 

Next, we use Lemma \ref{lem:variant}. We start with an arbitrary $\ep > 0$.  
By (ii) there is an index $k$ such that  $A^*_{S(k)} - a^*_{S(k)} < \ep$. 
We take $n > k$ and consider the sub-tree $T^{S(n)}$ with $\Prob_n$ and $\Q_n\,$. 
We note that $L = \ell_{\sharp}(\Prob_n)/ \ell(\Prob_n) \le c_2$ by assumption (i).
Next,  
$$
\delta_n := \left(\frac{\DD(\Prob_n\|\Q_n)}{\ell(\Prob_n)}\right)^{\! 1/4} \to 0 \quad 
\text{as }\;n \to \infty
$$ 
by assumption (i) and (iii) together with \eqref{ref:to-infty}.

Now the upper bound of Theorem \ref{thm:compare}, combined with the variant provided
by Lemma  \ref{lem:variant}, yields
$$
\begin{aligned}
\left|\frac{\HH(\Prob_n)}{\ell(\Prob_n)} 
      - \frac{\HH(\Q_n)}{\ell(\Q_n)}\right|
< \ &c_2 \Bigl( 2\ep + M_{\ep}\,\varphi(\delta_n) \Bigr)
\,+\, C\sqrt{c_2}\,\delta_n\\ 
&+ \ep 
\,+\,
 (A-a) \,\max \left\{ \frac{\ell(\Prob_{S(k)})}{\ell(\Prob_n)}\,,\, 
                     \frac{\ell(\Q_{S(k)})}{\ell(\Q_n)} \right\}\,.
\end{aligned}
$$
If we let $n \to \infty$, again using assumption (i) together with \eqref{ref:to-infty}
to control the last maximum, we see that
$$
\limsup_{n \to \infty}
\left|\frac{\HH(\Prob_n)}{\ell(\Prob_n)} 
      - \frac{\HH(\Q_n)}{\ell(\Q_n)}\right| \le (2 c_2+1)\, \ep\,,
$$ 
because all other terms tend to $0$.
We are left with showing that $\HH(\Q_n)/\ell(\Q_n) \to h$.
This is standard: with $\ep$ and $k$ as above,
$$
\begin{aligned}
\left|\frac{\HH(\Q_n)}{\ell(\Q_n)} - h\right|
&\le \sum_{x \in \interior T^{S(n)}} 
\left| \frac{\HH_q(x)}{\ell(x)} - h \right|\, \mu_{\Q_n}(x)\\
&<  \sum_{x \in \interior T^{S(k)}} (A+h)\, \mu_{\Q_n}(x) +\hspace*{-2.5mm}
 \sum_{x \in \interior T^{S(n)} \setminus T^{S(k)}} \hspace*{-2.5mm}\ep\, \mu_{\Q_n}(x) 
\le (A+h) \frac{\ell(\Q^{(k)})}{\ell(\Q_n)} + \ep. 
\end{aligned}
$$
This becomes smaller than $2\ep$ when $n$ is large enough. 
\end{proof}

And finally, we prove the result on random perturbations.

\begin{proof}[\rm \textbf{Proof of Theorem \ref{thm:randper}}]
First of all, we need to verify that the (random) local entropies $\HH_p(x)$ are 
uniformly summable. By tightness of $q(\cdot|x)$ and $q'(\cdot|x)$, there is $\wt M$
such that for each $x$ there is a set $\wt N(x) \subset N(x)$ with $|\wt N(x)| \le \wt M$
such that 
$$
\max \{ q(y|x), q'(y|x)\} \le 1/(2e) \quad \text{for all}\quad y \in 
N(x)\setminus \wt N(x)\,.
$$
Using Lemma \ref{lem:phi}(iv),
$$
\sum_{y \in N(x)\setminus \wt N(x)} \varphi\bigl(p(y|x)\bigr) 
\le 2 \sum_{y \in N(x)\setminus \wt N(x)} \Bigl(\varphi\bigl(q(y|x)\bigr) + \varphi\bigl(q'(y|x)\bigr)\Bigr) 
$$
As $x$ varies, the right hand sides are uniformly summable along with
$\HH_q(x)$ and $\HH_{q'}(x)$. 

At this point, with the same line of reasoning as in the proof of Theorem \ref{thm:asy},
what remains is to show that $\DD(\Prob_n\|\Q_n)/\ell(\Prob_n)\to 0$
in probability, resp. almost surely. For this purpose, we first take $x \in S(n)$,
and use convexity, see \cite[Thm. 2.7.2]{CoTh}:
$$
\begin{aligned}
\DD_{p,q}(x) &= \DD\Bigl((1-\dde_n)\,q(\cdot|x) + \dde_n\, q'(\cdot|x)\big\|(1-\dde_n)\,q(\cdot|x) 
+ \dde_n\, q(\cdot|x)\Bigr)\\
&\le \dde_n \,  \DD\bigl(q'(\cdot|x)\big\|q(\cdot|x)\bigr) \le \dde_n\,D\,.
\end{aligned}
$$
Now we combine the last inequality with Proposition \ref{pro:div} and the Definition 
\eqref{eq:nu} of $\mu_{\Prob_n}\,$,
$$
\frac{\DD(\Prob_n\|\Q_n)}{\ell(\Prob_n)} 
\le \Ex_{\mu_{\Prob_n}}\left( \frac{\dde_{|x|}\, D}{\ell(x)} \right) 
= \sum_{k=0}^n \sum_{x \in S(k)} \mu_{\Prob_n}(x) \frac{\dde_{k}\,D}{\ell(x)} 
= \frac{D}{\ell(\Prob_n)} \sum_{k=0}^n \dde_k  \le \frac{D}{c_1 n}\sum_{k=0}^n \dde_k\,.
$$
The last inequality follows from assumption (i). 
If $\Ex(\dde_n) \to 0$, resp. $\dde_n \to 0$ almost surely then
$$
\frac{D}{n}\sum_{k=0}^n \dde_k \to 0
$$
in probability, resp. almost surely. 
\end{proof}


\begin{thebibliography}{22}

\bibitem{BoeAm} B\"ocherer, G., and Amjad, R. A.: \emph{Informational divergence
and  entropy rate on rooted trees with probabilities,}  
IEEE Int. Symp. Inf. Theory, June 2014; {\tt arXiv:1310.2882}

\bibitem{CiGi} Ciuperca G., and Girardin V.: \emph{On the estimation of the entropy 
rate of finite Markov chains,} Proceedings XIth International Symposium on ASMDA 
May 2005, Brest, France, pp. 1109--1117;
{\tt https://conferences.telecom-bretagne.eu/asmda2005/}

\bibitem{CiGiLh} Ciuperca, G., Girardin, V., and Lhote, L.: \emph{Computation and 
estimation of generalized entropy rates for denumerable Markov chains,} IEEE Trans. 
Inform. Theory {\bf 57} (2011) 4026--4034.

\bibitem{CoTh} Cover, T. M., and Thomas, J. A.: \emph{Elements of Information Theory,} 2nd ed.,
Wiley, Hoboken, 2006.

\bibitem{CsTa}  Csisz\'ar, I., and Talata, Z.: \emph{Context tree estimation for not 
necessarily finite memory processes, via BIC and MDL,} IEEE Trans. Inform. Theory {\bf 52} (2006)
1007--1016.

\bibitem{Gr}  Gray, R. M.: \emph{Entropy and Information Theory,} 2nd ed., Springer, New York, 2011. 

\bibitem{Ka}  Kakutani, S.: \emph{On equivalence of infinite product measures,} 
Ann. of Math. (2) {\bf 49} (1948) 214--224. 

\bibitem{RuMa} Rueppel, R. A., and Massey, J. L.: \emph{Leaf-average node-sum interchanges
in rooted trees with applications,} in \emph{Communications and Cryptography: two Sides
of one TapesRty,} R. E. Blahut et al., eds, pages 343--355, Kluwer, Dordrecht and Boston, 1994. 

\bibitem{SaVe} Sason, I., and Verd\'u, S.: \emph{$f$-Divergence inequalities,} IEEE Trans. Inform. 
Theory {\bf 62} (2016)
5973--6006.

%\bibitem{Va} Vajda, I.: \emph{Note on discrimination information and variation,} 
%IEEE Trans. Inform. Theory {\bf IT-16} (1970)
%771--773.

%\bibitem{Ve} Verd\'u, S.: \emph{Total variation distance and the distribution of the relative information,} 
%Proceedings of the 2014 Information Theory and Applications (ITA) Workshop, pp. 499--501, 
%San-Diego, California, USA, February 2014.

\bibitem{WMarkov} Woess, W.: \emph{Denumerable Markov Chains. Generating
functions, Boundary Theory, Random Walks on Trees.} European Math.
Soc. Publishing House, 2009.  

\bibitem{YaNi} Yari, G. H., and Nikooravesh, Z.: \emph{Estimation of the entropy 
rate of ergodic Markov chains.}
J. Iran. Stat. Soc. (JIRSS) {\bf 11} (2012) 75--85.
 
\end{thebibliography}
\end{document}